\newcommand{\bse}{\begin{subequations}}
\newcommand{\ese}{\end{subequations}}
\newtheorem{theorem}{Theorem}[section]
\newtheorem{remark}[theorem]{Remark}
\newtheorem{proposition}[theorem]{Proposition}
\numberwithin{equation}{section}
\DeclareMathOperator{\diag}{diag}
\def\wt#1{\tilde{#1}}
\newcommand{\rT}{\mathrm{T}}
\newcommand{\cN}{\mathcal{N}}
\newcommand{\bL}{\mathbf{L}}
\newcommand{\bU}{\mathbf{U}}
\newcommand{\bM}{\mathbf{M}}
\newcommand{\bI}{\mathbf{I}}
\newcommand{\bV}{\mathbf{V}}
\newcommand{\ld}{\lambda}
\newcommand{\oa}{\omega}
\newcommand{\bSg}{\mathbf{\Sigma}}
\newcommand{\bT}{\mathbf{T}}
\acrodef{2D}[2D]{two-dimensional}
\acrodef{2DTL}[2DTL]{two-dimensional Toda lattice}
\acrodef{3D}[3D]{three-dimensional}
\acrodef{ABS}[ABS]{Adler--Bobenko--Suris}
\acrodef{bDT}[bDT]{binary Darboux transformation}
\acrodef{BSQ}[BSQ]{Boussinesq}
\acrodef{CAC}[CAC]{consistency-around-the-cube}
\acrodef{DT}[DT]{Darboux transformation}
\acrodef{DL}[DL]{direct linearisation}
\acrodef{FX}[FX]{Fordy--Xenitidis}
\acrodef{GD}[GD]{Gel'fand--Dikii}
\acrodef{KP}[KP]{Kadomtsev--Petviashvili}
\acrodef{KdV}[KdV]{Korteweg--de Vries}
\acrodef{HM}[HM]{Hirota--Miwa}
\acrodef{MDC}[MDC]{multi-dimensional consistency}
\acrodef{NQC}[NQC]{Nijhoff--Quispel--Capel}
\acrodef{ODE}[ODE]{ordinary differential equation}
\acrodef{PDE}[PDE]{partial differential equation}
\acrodef{PDeltaE}[P$\Delta$E]{partial difference equation}
\acrodef{sG}[sG]{sine--Gordon}
\title[$\mathbb{Z}_\mathcal{N}$ graded discrete integrable systems and Darboux transformations]
{$\mathbb{Z}_\mathcal{N}$ graded discrete integrable systems and Darboux transformations}
\author{Ying Shi}
\address{School of Science \\ Zhejiang University of Science and Technology \\ Hangzhou 310023 \\ China}
\email{yingshi@zust.edu.cn}
\begin{document}

\begin{abstract}
We present the Darboux transformations for a novel class of two-dimensional discrete integrable systems named as  $\mathbb{Z}_\mathcal{N}$ graded discrete integrable systems, which were firstly proposed by Fordy and Xenitidis within the framework of $\mathbb{Z}_\mathcal{N}$ graded discrete Lax pairs very recently.
In this paper, the  $\mathbb{Z}_\mathcal{N}$ graded discrete equations in coprime case  and their corresponding Lax pairs are derived from the discrete Gel'fand-Dikii hierarchy by applying a transformation of the independent variables.
The construction of the Darboux tranformations is realised by considering the associated linear problems in the bilinear formalism for the $\mathbb{Z}_\cN$ graded lattice equations.
We show that all these $\mathbb{Z}_\cN$ graded  equations share a unified solution structure in our scheme.
\end{abstract}

\keywords{$\mathbb{Z}_\mathcal{N}$ discrete integrable system, Darboux transformations, tau function, discrete Gel'fand--Dikii hierarchy}

\maketitle

\section{Introduction}\label{S:Intro}

Discrete integrable systems have played an increasingly prominent part in mathematical physics.
A number of intriguing connections have emerged between the field of discrete integrable systems and various areas of mathematics and physics in the past two decades \cite{HJN16, DIS04}.
Discrete integrable systems include many types of equations, such as difference equations whose arguments are shifted by integer or other finite steps.
Although many processes in physics are mathematically described by differential equations (and this description reflects the smoothness of natural processes as we often experience in macroscopic phenomena), there are many physical processes (such as the ones in quantum physics) that are of an inherently discrete nature and are better described by difference equations rather than differential equations.
If we would take the continuum limit that the step size becomes infinitesimally small, then we usually recover a corresponding differential equation.
Therefore, in some sense the discrete systems are the perfect integrators for their continuous counterparts and they are widely believed to be more fundamental than their continuous versions\cite{HJN16}.
However, the difference equations before taking the limit are essentially nonlocal. This nonlocality makes such systems both be richer as well as more difficult to deal with.
Therefore, it is meaningful to well understand the nature of a discrete integrable system by developing classical methods and  inventing new mathematical tools.

 The most salient or characteristic member we shall be concerned with of the class of discrete integrable systems is the celebrated Hirota-Miwa (HM) equation (or say the discrete bilinear KP equation) \cite{Hirota1981, Miwa1982},
mainly because this discrete equation is the base member in a hierarchy which is equivalent, after a change of coordinates (Miwa tranformation\cite{Miwa1982}), to the whole continuous KP hierarchy.
It is commonly known that the continuous KP hierarchy can be reduced to all those well-known two-dimensional soliton hierarchies associated with their linear equations\cite{JM82}.
In fact,  the HM equation also plays a role as a master model in the discrete systems, since many two-dimensional discrete integrable equations such as the discrete Korteweg-de Vries (KdV) type, Boussinesq (BSQ) type, etc., can be obtained from the HM equation by taking reductions\cite{SNZ14,SNZ17,Hone17}.
This means that many things can be inherited, such as Lax pair, Darboux transformations, exact solutions, etc., from the HM equation.
Therefore our next section will be devoted to an overview of the necessary technical background material on the discrete KP-type equations.

A key feature of discrete integrable systems is the property called multi-dimensional consistency\cite{NW01, ABS03}.
This property means that a nonlinear equation can be consistently extended to equations through introducing an arbitrary number of  discrete independent variables ( together with their corresponding lattice parameters) or continuous independent variables, cf.\cite{DS97, NW01}.
Therefore, both discrete and continuous equations can be simultaneously embedded into an infinite-dimensional space spanned by both discrete and continuous coordinates.
The multi-dimensional consistency property was later employed to the classification of scalar affine-linear discrete integrable systems \cite{ABS03}.
In this classification, the fundamental equations are such as the H1 equation\cite{ABS03} (see Figure \ref{F:dKdV})
\bse\label{dKdV}
\begin{align}\label{dKdV:pu}
 (u_{n,m}-u_{n+1,m+1})(u_{n,m+1}-u_{n+1,m})=a_1^2-a_2^2,
\end{align}
which in fact is the following most well-known discrete potential KdV equation by a transformation $u\rightarrow u+a_1n+a_2m$
 \begin{align}\label{dKdV:u}
 (a_1+a_2+u_{n,m}-u_{n+1,m+1})(a_1-a_2+u_{n,m+1}-u_{n+1,m})=a_1^2-a_2^2;
\end{align}
and the H3$_{\delta=0}$ equation\cite{ABS03}
\begin{align}\label{dKdV:pv}
 a_1(v_{n,m}v_{n,m+1}+v_{n+1,m}v_{n+1,m+1})-a_2(v_{n,m}v_{n+1,m}+v_{n,m+1}v_{n+1,m+1})=0,
\end{align}
which in fact  is the following well-known discrete potential modified KdV equation up to a transformation $v\rightarrow i^{n_1+n_2}v$
\begin{align}\label{dKdV:v}
 a_1(v_{n,m}v_{n,m+1}-v_{n+1,m}v_{n+1,m+1})-a_2(v_{n,m}v_{n+1,m}-v_{n,m+1}v_{n+1,m+1})=0.
\end{align}
\ese
Here the subscripts $n$ and $m$ denote the discrete independent variables of the dependent variables $u$ and $v$, and $a_1$ and $a_2$ are their corresponding lattice parameters.
The above discrete potential KdV equation \eqref{dKdV:u} and potential modified KdV equation \eqref{dKdV:v} were systematically studied as integrable discrete equations within the direct linearisation framework together with the discrete Schwarzian KdV equation (i.e. the cross-ratio equation) by Nijhoff, Quispel and Capel et al., see \cite{NC95}.
\begin{figure}
\centering
\begin{tikzpicture}[scale=2]
 \coordinate (00) at (0, 0);
 \coordinate (10) at (1, 0);
 \coordinate (20) at (2, 0);
 \coordinate (01) at (0, 1);
 \coordinate (11) at (1, 1);
 \coordinate (21) at (2, 1);
 \coordinate (02) at (0, 2);
 \coordinate (12) at (1, 2);
 \coordinate (22) at (2, 2);
 \draw[very thick] (00) -- (20);
 \draw[very thick] (01) -- (21);
 \draw[very thick] (02) -- (22);
 \draw[very thick] (00) -- (02);
 \draw[very thick] (10) -- (12);
 \draw[very thick] (20) -- (22);
 \fill [black] (00) circle (2pt);
 \fill [black] (10) circle (2pt);
 \fill [black] (01) circle (2pt);
 \fill [black] (11) circle (2pt);
 \node at (0.1,-0.2) {$u_{n,m}$};
 \node at (1.2,-0.2) {$u_{n+1,m}$};
 \node at (2.2,-0.2) {$u_{n+2,m}$};
 \node at (0.3,0.8) {$u_{n,m+1}$};
 \node at (1.4,0.8) {$u_{n+1,m+1}$};
 \node at (2.4,0.8) {$u_{n+2,m+1}$};
 \node at (0.1,2.2) {$u_{n,m+2}$};
 \node at (1.2,2.2) {$u_{n+1,m+2}$};
 \node at (2.2,2.2) {$u_{n+2,m+2}$};
\end{tikzpicture}
\caption{Discrete KdV-type equations}
\label{F:dKdV}
\end{figure}
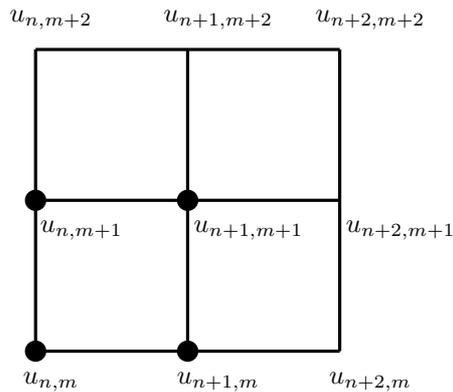
Apart from the above one-component (scalar) equations, some results do exist for multi-component discrete integrable systems, such as the Gel'fand-Dikii hierarchy which was proposed in \cite{NPCQ92}, the extended discrete BSQ-type equations in multi-component form\cite{ZZN12}, etc.
Relating to the property of multi-dimensional consistency,  a search for multi-component discrete integrable systems was also made by Hietarinta \cite{Hie11}, resulting in a remarkable classification of multi-component discrete BSQ-type equations.

Very recently, the classification of discrete integrable systems for multi-component discrete systems, associated with $\mathbb{Z}_\mathcal{N}$-graded Lax pairs ( $\mathcal{N}=2,3,\dots$), was proposed by Fordy and Xenitidis \cite{FX17} and gives rise to a large class of discrete integrable systems falling into two categories: coprime (indecomposable) and non-coprime (decomposable) cases.
In their classification, it includs some novel examples and some well known examples (such as for $\mathcal{N}=2$ having discrete potential unmodified, modified  and Schwarzian KdV equations, the discrete sine-Gordon equation; for $\mathcal{N}=3$ having the discrete potential unmodified, modified BSQ equations).
The key point in their construction is an $\cN\times\cN$ periodic matrix $\bSg$ which has very nice properties such as $\bSg^\cN=\bI$ and $\bSg^\rT=\bSg^{-1}$, where $\bI$ is the $\cN\times\cN$ unit matrix.

However, the search of Fordy and Xenitidis's  $\mathbb{Z}_\mathcal{N}$-graded discrete integrable systems is still rather primitive, especially regarding the problem of exact solutions (such as multi-soliton solutions).
Our motivation is to provide a systematic framework for exact solutions to the $\mathbb{Z}_\mathcal{N}$ graded discrete integrable systems in coprime case that were first proposed by Fordy and Xenitidis in \cite{FX17}.
The method we adopt is the \ac{DT}.
Compared with other approaches, the \ac{DT} method provides a very direct way for solving an integrable equation
since it only relies on a seed solution and a Lax pair and has been successfully applied to many nonlinear integrable \ac{PDE}s, see monographs \cite{Mat91,Rog02}.
As a compound type of a \ac{DT}, the \ac{bDT} method requires considering a Lax pair and its adjoint simultaneously
when constructing exact solutions for a nonlinear equations, providing more rich solutions, see e.g. \cite{WTLS98, NW97}.
The \ac{DT} and \ac{bDT} methods for discrete integrable systems still need to be developed,
and are so far applied to very limited examples,
including the discrete \ac{KP}-type equations equation (see \cite{Nim97,WTS97,DN09})
and the discrete unmodified and modified \ac{KdV} equations, i.e. \eqref{dKdV:u} and \eqref{dKdV:v}, cf. \cite{SNZ14,SNZ17}.

There are two issues in terms of constructing \ac{DT}s for the $\mathbb{Z}_\mathcal{N}$-graded discrete integrable equations as follows:
(1) Nonlinear potentials often appear in a nonlocal way,
which results in the fact that sometimes it is difficult to write down explicit \ac{DT} formulae;
(2) Although Lax pairs for the discrete equations are given in \cite{FX17},
one still needs their adjoint Lax pairs in order to construct the \ac{bDT}s.
In the present paper, we provide ways to both problems.
For the former one, we introduce an extra continuous variable $x$ which is associated with continuous spectral problems
compatible with the $\mathbb{Z}_\cN$ graded discrete Lax pairs,
and thus the explicit \ac{DT} formalism will rely on $x$;
while for the latter one, we give out adjoint $\mathbb{Z}_\cN$ graded discrete Lax pairs.
Our idea is to link the $\mathbb{Z}_\mathcal{N}$ graded discrete integrable models to the theory of the discrete and continuous \ac{KP} hierarchies.
By performing periodic reductions of the \ac{KP}-type equations
and considering deformations of the reduced equations,
we obtain not only the $\mathbb{Z}_\mathcal{N}$ graded discrete integrable equations (the coprime case of \ac{FX} discrete systems), the corresponding $\mathbb{Z}_\cN$ graded Lax pairs and novel adjoint discrete Lax pairs, but also their compatible continuous/semi-discrete analogues.
More importantly, such an approach also gives rise to the bilinear formalism of these equations,
which relates everything to a single key variable, i.e. the tau function.
The \ac{DT} will be constructed on the level of the tau function,
which then naturally induces those of the discrete systems in other nonlinear variables
(including the additive (i.e. unmodified) potential and the quotient (i.e. modified) potential).
As examples, we give the explicit formulae for multi-soliton solutions for the \ac{FX} equations in coprime case.

The paper is organised as follows.
We briefly review the main results in the discrete KP-type equations in section \ref{S:dKP}.
In section \ref{S:ZN}, by performing the periodic reductions on the discrete KP-type equations,
we recover the discrete Gel'fand-Dikii hierarchy, which later, by a variable transformation,  successfully realises the $\mathbb{Z}_\cN$ graded discrete integrable systems,
including the novel bilinear formalism of the \ac{FX} models,
the $\mathbb{Z}_\cN$ graded discrete$/$continuous spectral problems
and the resulting  $\mathbb{Z}_\cN$ graded semi-discrete equations.
A unified \ac{DT} for the $\mathbb{Z}_\cN$ graded discrete equations is presented in section \ref{S:DT}.
Section \ref{S:Sol} is concerned with exact solutions obtained from the \ac{DT}.

\section{Discrete Kadomtsev--Petviashvili family}\label{S:dKP}

\subsection{Nonlinear difference equations and Lax triplets}

Equations in the discrete \ac{KP} family, as \ac{3D} integrable discrete models,
are often considered as the most fundamental models in the theory of discrete integrable systems.
Below we briefly review the main results in the discrete \ac{KP} family.

There are several nonlinear equations in the discrete \ac{KP} family.
Here we only list those equations which will be used in the later sections\footnote{
Apart from the nonlinear forms given here, there also exist other nonlinear forms in the discrete \ac{KP} family
such as the discrete Schwarzian \ac{KP} equation \cite{DN91}, the (2+1)-dimensional \ac{NQC} equation \cite{NCWQ84},
and the various nonpotential versions of the discrete \ac{KP} equations, see \cite{Nim06,GRPSW07,Fu17a}.
}
as follows:
\bse\label{dKP}
\begin{align}
 &\frac{a_1-a_3+u_{n,m+1,h+1}-u_{n+1,m+1,h}}{a_1-a_3+u_{n,m,h+1}-u_{n+1,m,h}}
 =\frac{a_2-a_3+u_{n+1,m,h+1}-u_{n+1,m+1,h}}{a_2-a_3+u_{n,m,h+1}-u_{n,m+1,h}}
 =\frac{a_1-a_2+u_{n,m+1,h+1}-u_{n+1,m,h+1}}{a_1-a_2+u_{n,m+1,h}-u_{n+1,m,h}}, \label{dKP:u} \\
 &a_1\left(\frac{v_{n+1,m+1,h}}{v_{n,m+1,h}}-\frac{v_{n+1,m,h+1}}{v_{n,m,h+1}}\right)
 +a_2\left(\frac{v_{n,m+1,h+1}}{v_{n,m,h+1}}-\frac{v_{n+1,m+1,h}}{v_{n+1,m,h}}\right)
 +a_3\left(\frac{v_{n+1,m,h+1}}{v_{n+1,m,h}}-\frac{v_{n,m+1,h+1}}{v_{n,m+1,h}}\right)=0, \label{dKP:v} \\
 &(a_1-a_2)\tau_{n,m,h+1}\tau_{n+1,m+1,h}+(a_2-a_3)\tau_{n+1,m,h}\tau_{n,m+1,h+1}+(a_3-a_1)\tau_{n,m+1,h}\tau_{n+1,m,h+1}=0, \label{dKP:tau}
\end{align}
\ese
Equation for $u$ and $v$ are referred to as the (unmodified) discrete \ac{KP} equation and the discrete modified \ac{KP} equation, respectively,
which were given by Nijhoff et al. \cite{NCW85} within the direct linearisation framework.
The equation expressed by the tau function $\tau$ is often known as the \ac{HM} equation or the discrete bilinear \ac{KP} equation.
It first appeared in Hirota's paper \cite{Hirota1981} in a slightly different form
and was denoted as the discrete analogue of a generalised Toda equation.
The parametrisation in \eqref{dKP:tau} was attributed to Miwa \cite{Miwa1982},
who showed that the \ac{HM} equation actually encodes the whole hierarchy of the continuous bilinear \ac{KP} equations.

The nonlinear equations in \eqref{dKP} are actually connected with each other through the discrete Miura maps (see e.g. \cite{HJN16,Fu17a})
\bse\label{dKP:MT}
\begin{align}
 &a_1-a_2+u_{n,m+1,h}-u_{n+1,m,h}
 =a_1\frac{v_{n+1,m,h}}{v_{n,m,h}}-a_2\frac{v_{n,m+1,h}}{v_{n,m,h}}
 =(a_1-a_2)\frac{\tau_{n,m,h}\tau_{n+1,m+1,h}}{\tau_{n,m+1,h}\tau_{n+1,m,h}}, \\
 &a_2-a_3+u_{n,m,h+1}-u_{n,m+1,h}
 =a_2\frac{v_{n,m+1,h}}{v_{n,m,h}}-a_3\frac{v_{n,m,h+1}}{v_{n,m,h}}
 =(a_2-a_3)\frac{\tau_{n,m,h}\tau_{n,m+1,h+1}}{\tau_{n,m,h+1}\tau_{n,m+1,h}}, \\
 &a_1-a_3+u_{n,m,h+1}-u_{n+1,m,h}
 =a_1\frac{v_{n+1,m,h}}{v_{n,m,h}}-a_3\frac{v_{n,m,h+1}}{v_{n,m,h}}
 =(a_1-a_3)\frac{\tau_{n,m,h}\tau_{n+1,m,h+1}}{\tau_{n,m,h+1}\tau_{n+1,m,h}}.
\end{align}
\ese
We point out that on the discrete level Miura maps often take the form of nonlocal difference transforms.
This is unlike the continuous case, in which the unmodified variable $u$ itself is normally uniquely determined
by the modified variable $v$ and the tau function $\tau$.

The associated linear equations (i.e. the Lax triplet) for the discrete \ac{KP} equation \eqref{dKP:u} are given by
\bse\label{dKP:Lax}
\begin{align}
 &\phi_{n+1,m,h}-\phi_{n,m+1,h}=(a_1-a_2+u_{n,m+1,h}-u_{n+1,m,h})\phi_{n,m,h}, \label{dKP:Lax1} \\
 &\phi_{n,m+1,h}-\phi_{n,m,h+1}=(a_2-a_3+u_{n,m,h+1}-u_{n,m+1,h})\phi_{n,m,h}, \label{dKP:Lax2} \\
 &\phi_{n+1,m,h}-\phi_{n,m,h+1}=(a_1-a_3+u_{n,m,h+1}-u_{n+1,m,h})\phi_{n,m,h}, \label{dKP:Lax3}
\end{align}
\ese
see e.g. \cite{SNZ14,SNZ17}.
The corresponding Lax triplets for the modified \ac{KP} equation \eqref{dKP:v} and the \ac{HM} equation \eqref{dKP:tau}
can be obtained by replacing $u$ by $v$ and $\tau$, respectively, with the help of the Miura maps \eqref{dKP:MT}.
The Lax pair triplet \eqref{dKP:Lax} has its adjoint (cf. \cite{SNZ14,SNZ17}) which is composed of
\bse\label{dKP:Adjoint}
\begin{align}
 &\psi_{n-1,m,h}-\psi_{n,m-1,h}=(a_1-a_2+u_{n-1,m,h}-u_{n,m-1,h})\psi_{n,m,h}, \label{dKP:Adjoint1} \\
 &\psi_{n,m-1,h}-\psi_{n,m,h-1}=(a_2-a_3+u_{n,m-1,h}-u_{n,m,h-1})\psi_{n,m,h}, \label{dKP:Adjoint2} \\
 &\psi_{n-1,m,h}-\psi_{n,m,h-1}=(a_1-a_3+u_{n-1,m,h}-u_{n,m,h-1})\psi_{n,m,h}. \label{dKP:Adjoint3}
\end{align}
\ese
The compatibility of either \eqref{dKP:Lax} or \eqref{dKP:Adjoint} will provide the discrete KP-type equations in \eqref{dKP}.
The adjoint linear system here is mentioned, because we know that there are two independent spectral variables needed to completely describe the solution space of a \ac{3D} integrable systems,
and each of them governs an associate linear system.

\subsection{Associated semi-discrete systems}

The benefit of considering Miura maps between nonlinear equations in the same class
is that one can construct a solution of a nonlinear equation from that of another equation connected by the Miura map.
However, on the discrete level we have observed that the Miura maps \eqref{dKP:MT} are nonlocal difference transforms,
leading to the difficulty of constructing exact solutions directly.
For instance, assume that the solution for the \ac{HM} equation $\tau$ is given.
To construct solutions for \eqref{dKP:u} and \eqref{dKP:v},
we still have to solve the set of linear difference equations for $u$ and $v$ given in \eqref{dKP:MT}, respectively.

A natural question would be whether such a difficulty can be overcome?
The answer is positive. But instead we have to introduce extra continuous and discrete independent variables respectively. To be more precise, for the unmodified case,
we introduce a continuous variable $x$ corresponding to the lowest order flow in the continuous \ac{KP} hierarchy.
The continuous variable $x$ is compatible with the discrete variables $n$, $m$ and $h$.
This is guaranteed by the \ac{MDC} property of the discrete and continuous \ac{KP} hierarchy,
namely the discrete \ac{KP} equation can be consistently embedded into an infinite-dimensional space spanned by infinitely many discrete and continuous variables, see \cite{Fu17b}.
For the modified case, we introduce a discrete variable $k$, which is also compatible with the discrete variables $n$, $m$ and $h$ by the \ac{MDC} property of the discrete \ac{KP} hierarchy.

Without loss of generality we consider the linear equation involving $x$ and an $h$ shift for the wave function
\begin{align}\label{sdKP:Lax}
 \partial_x\phi_{n,m,h}=(u_{n,m,h+1}-u_{n,m,h}-a_3)\phi_{n,m,h}+\phi_{n,m,h+1},
\end{align}
together with the Miura maps involving $x$ and the $h$ shift between $u$, $v$ and $\tau$ given by
\begin{align}\label{sdKP:MT}
 u_{n,m,h+1}-u_{n,m,h}-a_3=\partial_x\ln v_{n,m,h}-a_3\frac{v_{n,m,h+1}}{v_{n,m,h}}
 =\partial_x\ln\tau_{n,m,h+1}-\partial_x{\ln\tau_{n,m,h}}-a_3,
\end{align}
where $\partial_x$ denotes the derivative with respect to the continuous variable $x$.
Equation \eqref{sdKP:Lax} plays the role of a linear equation in the Lax pair of the semi-discrete \ac{KP} hierarchy.
The compatibility condition between \eqref{sdKP:Lax} and \eqref{dKP:Lax3} yields
\bse\label{sdKP}
\begin{align}
 &\partial_x(u_{n,m,h+1}-u_{n+1,m,h})=(a_1-a_3+u_{n,m,h+1}-u_{n+1,m,h})(u_{n,m,h}-u_{n,m,h+1}-u_{n+1,m,h}+u_{n+1,m,h+1}), \label{sdKP:u} \\
 &\partial_x(\ln v_{n,m,h+1}-\ln v_{n+1,m,h})
 =a_1\left(\frac{v_{n+1,m,h+1}}{v_{n,m,h+1}}-\frac{v_{n+1,m,h}}{v_{n,m,h}}\right)
 -a_3\left(\frac{v_{n+1,m,h+1}}{v_{n+1,m,h}}-\frac{v_{n,m,h+1}}{v_{n,m,h}}\right), \label{sdKP:v} \\
 &\tau_{n+1,m,h}\partial_x\tau_{n,m,h+1}-\tau_{n,m,h+1}\partial_x\tau_{n+1,m,h}
 =(a_1-a_3)(\tau_{n,m,h}\tau_{n+1,m,h+1}-\tau_{n,m,h+1}\tau_{n+1,m,h}). \label{sdKP:tau}
\end{align}
\ese
These are the semi-discrete \ac{KP}-type equations with two discrete variables $n$ and $h$ and one continuous variable $x$.

Similarly, we introduce the adjoint of the linear equation \eqref{sdKP:Lax} which is in the form
\begin{align}\label{sdKP:Adjoint}
 \partial_x\psi_{n,m,h}=(u_{n,m,h-1}-u_{n,m,h}+a_3)\psi_{n,m,h}-\psi_{n,m,h-1}.
\end{align}
The compatibility between \eqref{sdKP:Adjoint} and \eqref{dKP:Adjoint3} gives rise to the same semi-discrete \ac{KP}-type equations \eqref{sdKP}.

A remark here is that although we consider the linear equation and its adjoint involving $x$ and the $h$ shift,
it is also allowed to consider the linear equations involving $x$ and either an $n$ or $m$ shift.
The resulting semi-discrete equations can also be seen as the semi-discrete KP-type equations.

One can directly verify that these semi-discrete equations, namely \eqref{sdKP:u}, \eqref{sdKP:v} and \eqref{sdKP:tau},
are compatible with the fully discrete equations \eqref{dKP:u}, \eqref{dKP:v} and \eqref{dKP:tau}, respectively.
In other words, the semi-discrete equations are multidimensionally consistent with the discrete ones.

The semi-discrete Miura map \eqref{sdKP:MT} implies that
the unmodified variable $u$ itself can be expressed by the tau function $\tau$ via a logarithm derivative.
Furthermore, we can also observe that in the \emph{zero parameter} case,
the modified variable $v$ is expressed by
the ratio of the shifted tau function and the unshifted tau function with regard to the shifted direction in zero parameter.
By consistently introducing two extra independent variables,
i.e. the continuous variable $x$ and the discrete variable $k$ associated with lattice parameter $a_4$ (as is guaranteed by the \ac{MDC} of the discrete \ac{KP}), we can conclude that
\begin{align}\label{cKP:BLT}
 u_{n,m,h}=\partial_x\ln\tau_{n,m,h} \quad \hbox{and} \quad v_{n,m,h,k}=\left.\frac{\tau_{n,m,h,k+1}}{\tau_{n,m,h,k}}\right|_{a_4=0}.
\end{align}
Such relations can also be verified in the \ac{DL} framework (see \cite{Fu18a,Fu18b}),
and also coincide with the bilinear transforms in the \ac{KP} theory established by the Sato school, cf. e.g. \cite{JM83}.

We can see that in \eqref{cKP:BLT} the dynamical evolutions with respect to $x$ and $k$ do not appear in \eqref{dKP}.
In other words, once the bilinear equation \eqref{dKP:tau} is solved,
we can immediately construct the corresponding solutions for \eqref{dKP:u} and \eqref{dKP:v} with the help of \eqref{cKP:BLT},
instead of solving \eqref{dKP:MT}.
However the price one has to pay is that the semi-discrete equation \eqref{sdKP:tau} must be considered simultaneously
when we solve the \ac{HM} \eqref{dKP:tau}, in order to add the continuous dynamical evolution in the tau function.

\section{$\mathbb{Z}_\cN$ graded discrete Lax pairs and related nonlinear equations}\label{S:ZN}

\subsection{Discrete Gel'fand--Dikii hierarchy}

We now consider the periodic reduction of the discrete \ac{KP}-type equations given in \eqref{dKP} with the purpose of obtaining the discrete Gel'fand-Dikii hierarchy.  We may introduce the following periodicity condition on the $\tau$  function
\begin{subequations}\label{Period}
\begin{align}\label{Period:NL-tau}
 \tau_{n,m,h+\cN}=\tau_{n,m,h},
\end{align}
and from \eqref{cKP:BLT} consequently obtain, with setting $a_3=0$,
\begin{align}\label{Period:NL}
u_{n,m,h+\cN}=u_{n,m,h}, \quad v_{n,m,h+\cN}=v_{n,m,h},
\end{align}
\end{subequations}
where the positive integer $\cN\geqslant2$. These are the constraints in the periodic reduction for the discrete \ac{KP} equations \eqref{dKP}.
Moreover, notice that the wave functions are fully determined by the tau function. We may also introduce the quasi-periodicity conditions of $\phi$ and $\psi$, respectively, which are given as follows:
\begin{align}\label{Period:L}
 \phi_{n,m,h+\cN}=\ld^\cN\phi_{n,m,h}, \quad \psi_{n,m,h+\cN}=\ld^{-\cN}\psi_{n,m,h}.
\end{align}

Now we can impose the constraints \eqref{Period:NL-tau} and \eqref{Period:NL}, with setting zero parameter $a_3=0$, on the discrete \ac{KP}-type equations \eqref{dKP} for obtaining their periodic reductions.
In \eqref{Period} and \eqref{Period:L}, we have seen that the periodic reduction is performed on the independent variable $h$.
Thus, the reduced equations are \ac{2D} discrete models having $n$ and $m$ as the independent variables,
and the variable $h$ plays a role as the index for each component.
For better presentation, we adopt the superscript $(\cdot)^{(h)}$ instead of the subscript $(\cdot)_h$ in the nonlinear variables in order to distinguish
the index of the components $h$ and the lattice variables $n$ and $m$. For example, $u_{n,m}^{(h)}\doteq u_{n,m,h}$.

By setting $a_3=0$, the equations in \eqref{dKP} turn out to be
\bse\label{dGD}
\begin{align}
 &\frac{a_1+u_{n,m+1}^{(h+1)}-u_{n+1,m+1}^{(h)}}{a_1+u_{n,m}^{(h+1)}-u_{n+1,m}^{(h)}}
 =\frac{a_2+u_{n+1,m}^{(h+1)}-u_{n+1,m+1}^{(h)}}{a_2+u_{n,m}^{(h+1)}-u_{n,m+1}^{(h)}}, \label{dGD:u} \\
 &a_1\left(\frac{v_{n+1,m+1}^{(h)}}{v_{n,m+1}^{(h)}}-\frac{v_{n+1,m}^{(h+1)}}{v_{n,m}^{(h+1)}}\right)
 =a_2\left(\frac{v_{n+1,m+1}^{(h)}}{v_{n+1,m}^{(h)}}-\frac{v_{n,m+1}^{(h+1)}}{v_{n,m}^{(h+1)}}\right), \label{dGD:v} \\
 &a_1\left(\tau_{n,m}^{(h+1)}\tau_{n+1,m+1}^{(h)}-\tau_{n,m+1}^{(h)}\tau_{n+1,m}^{(h+1)}\right)
 =a_2\left(\tau_{n,m}^{(h+1)}\tau_{n+1,m+1}^{(h)}-\tau_{n+1,m}^{(h)}\tau_{n,m+1}^{(h+1)}\right), \label{dGD:tau}
\end{align}
\ese
and they together with the periodicity conditions following from \eqref{Period:NL} and \eqref{Period:NL-tau},
i.e. $u_{n,m}^{(h+\cN)}=u_{n,m}^{(h)}$, $v_{n,m}^{(h+\cN)}=v_{n,m}^{(h)}$ and $\tau_{n,m}^{(h+\cN)}=\tau_{n,m}^{(h)}$,
form the equations of rank $\cN $ in the discrete unmodified, modified and bilinear \ac{GD} hierarchies, respectively.
Equations in \eqref{dGD} should be understood as $\cN$-component coupled systems.
Without losing generality, these coupled systems are composed of their respective components for $h=0,1,\cdots,\cN-1$.

We can also derive the Miura maps between the equations in \eqref{dGD}
\begin{align}\label{dGD:MT}
 a_1+u_{n,m}^{(h+1)}-u_{n+1,m}^{(h)}=a_1\frac{v_{n+1,m}^{(h)}}{v_{n,m}^{(h)}}
 =a_1\frac{\tau_{n+1,m}^{(h+1)}\tau_{n,m}^{(h)}}{\tau_{n+1,m}^{(h)}\tau_{n,m}^{(h+1)}}, \quad
 a_2+u_{n,m}^{(h+1)}-u_{n,m+1}^{(h)}=a_2\frac{v_{n,m+1}^{(h)}}{v_{n,m}^{(h)}}
 =a_2\frac{\tau_{n,m+1}^{(h+1)}\tau_{n,m}^{(h)}}{\tau_{n,m+1}^{(h)}\tau_{n,m}^{(h+1)}} ,
\end{align}
which allows that we may have $v_{n,m}^{(h)}=\tau_{n,m}^{(h+1)}/\tau_{n,m}^{(h)}$.
These difference transforms are natural consequences of the Miura maps in \eqref{dKP:MT} under the periodicity conditions \eqref{Period}.
Making use of the periodicity condition \eqref{Period:NL-tau} of the tau function, we can from \eqref{dGD:MT} further derive
\begin{align}\label{dGD:FirstIntegral}
 \prod_{h=0}^{\cN-1}v_{n,m}^{(h)}=1, \quad \hbox{and then} \quad
 \prod_{h=0}^{\cN-1}\left(a_1+u_{n,m}^{(h+1)}-u_{n+1,m}^{(h)}\right)=a_1^\cN, \quad
 \prod_{h=0}^{\cN-1}\left(a_2+u_{n,m}^{(h+1)}-u_{n,m+1}^{(h)}\right)=a_2^\cN.
\end{align}

We can also impose the periodicity conditions \eqref{Period} and \eqref{Period:L} on the Lax triplets of the discrete \ac{KP}-type equations.
By taking $a_3=0$ in \eqref{dKP:Lax} and introducing the reduced wave function having components $\phi_{n,m}^{(h)}\doteq \ld^{-h}\phi_{n,m,h}$,
we obtain linear equations as follows:
\begin{align}\label{dGD:Lax}
 \phi_{n+1,m}^{(h)}=\left(a_1+u_{n,m}^{(h+1)}-u_{n+1,m}^{(h)}\right)\phi_{n,m}^{(h)}+\ld\phi_{n,m}^{(h+1)}, \quad
 \phi_{n,m+1}^{(h)}=\left(a_2+u_{n,m}^{(h+1)}-u_{n,m+1}^{(h)}\right)\phi_{n,m}^{(h)}+\ld\phi_{n,m}^{(h+1)}
\end{align}
for $h=0,1,\cdots,\cN-1$, in which the wave function satisfies $\phi_{n,m}^{(h+\cN)}=\phi_{n,m}^{(h)}$ as it follows from \eqref{Period:L}.
The linear equations in \eqref{dGD:Lax} form the Lax pair for equation \eqref{dGD:u}.
Replacing the variable $u$ by $v$ and $\tau$, respectively, with the help of the Miura maps \eqref{dGD:MT},
we derive the respective Lax pairs for equations \eqref{dGD:v} and \eqref{dGD:tau}.

Similarly, the periodic reduction of \eqref{dKP:Adjoint} gives rise to the adjoint Lax pair for the equations in \eqref{dGD}.
Introducing $\psi_{n,m}^{(h)}\dot=\ld^h\psi_{n,m,h}$, we can write the reduced adjoint linear problem as
\begin{align}\label{dGD:Adjoint}
 \psi_{n-1,m}^{(h)}=\left(a_1+u_{n-1,m}^{(h)}-u_{n,m}^{(h-1)}\right)\psi_{n,m}^{(h)}+\ld\psi_{n,m}^{(h-1)}, \quad
 \psi_{n,m-1}^{(h)}=\left(a_2+u_{n,m-1}^{(h)}-u_{n,m}^{(h-1)}\right)\psi_{n,m}^{(h)}+\ld\psi_{n,m}^{(h-1)}
\end{align}
for $h=0,1,\cdots,\cN-1$, where the potential $u$ still obeys \eqref{dGD:u},
and each component in the wave function satisfies $\psi_{n,m}^{(h+\cN)}=\psi_{n,m}^{(h)}$ as a result of the second relation in \eqref{Period:L}.
The adjoint Lax pair for the modified equation \eqref{dGD:v} and the bilinear equation \eqref{dGD:tau} are derived with the aid of the same Miura maps \eqref{dGD:MT}.


\subsection{Deformations and $\mathbb{Z}_\cN$ graded discrete integrable models}

Reference \cite{FX17} shows that a large class of \ac{2D} integrable difference equations, including a number of novel models, were studied by Fordy and Xenitidies within the framework of $\mathbb{Z}_\cN$ graded discrete Lax pairs.
In our paper here we show that the models arising from the coprime case in their construction,
i.e. the equations corresponding to the additive and quotient potentials
in the equivalent classes $(\alpha,\alpha+1;\beta,\beta+1)$ for $\alpha,\beta=0,1,\cdots,\cN-1$,
are deformations of the discrete unmodified and modified Gel'fand-Dikii hierarchies given by \eqref{dGD:u} and \eqref{dGD:v}.
Meanwhile, by using the same deformations, we can recover their $\mathbb{Z}_\cN$ graded discrete Lax pairs from \eqref{dGD:Lax}.
Furthermore, using our point of view we provide the associated bilinear form of these unmodified and modified $\mathbb{Z}_\cN$ graded discrete integrable models.

We introduce the change of variables
\begin{align}\label{dFX:CoV}
 n=n_\alpha, \quad m=m_\beta, \quad h=l+\alpha n_\alpha+\beta m_\beta,
\end{align}
where $\alpha,\beta\in\{0,1,\cdots,\cN-1\}$.

Taking the periodicity condition \eqref{Period:NL-tau} into account, we observe that under the variable transform \eqref{dFX:CoV} the tau function satisfies
\begin{align}\label{dFX:tauTransform}
 \tau_{n_\alpha+1,m_\beta}^{(l)}=\tau_{n+1,m}^{(h+\alpha)}, \quad
 \tau_{n_\alpha,m_\beta+1}^{(l)}=\tau_{n,m+1}^{(h+\beta)} \quad \hbox{and} \quad
 \tau_{n_\alpha,m_\beta}^{(l+1)}=\tau_{n,m}^{(h+1)}.
\end{align}
If we shift $h$ in \eqref{dGD:tau} by $\alpha+\beta$ units and make use of \eqref{dFX:tauTransform},
the following $\cN$-component coupled system of bilinear equations is derived:
\bse\label{dFX}
\begin{align}\label{dFX:tau1}
 a_1\left(\tau_{n_\alpha,m_\beta}^{(l+1+\alpha+\beta)}\tau_{n_\alpha+1,m_\beta+1}^{(l)}-\tau_{n_\alpha,m_\beta+1}^{(l+\alpha)}\tau_{n_\alpha+1,m_\beta}^{(l+1+\beta)}\right)
 =a_2\left(\tau_{n_\alpha,m_\beta}^{(l+1+\alpha+\beta)}\tau_{n_\alpha+1,m_\beta+1}^{(l)}-\tau_{n_\alpha+1,m_\beta}^{(l+\beta)}\tau_{n_\alpha,m_\beta+1}^{(l+1+\alpha)}\right),
\end{align}
where $l=0,1,\cdots,\cN-1$, and $\tau_{n_\alpha,m_\beta}^{(l+\cN)}=\tau_{n_\alpha,m_\beta}^{(l)}$.
This formula shows the multi-component systems expressed by the bilinear potential,
which plays the role of the bilinear formalism of the $\mathbb{Z}_\cN$ graded discrete integrable models.
We can also prove that the relation \eqref{dFX:tauTransform} also holds for the variables $u$ and $v$.
Therefore, from equations \eqref{dGD:u} and \eqref{dGD:v} we obtain unmodified $\mathbb{Z}_\cN$ graded discrete integrable systems
\begin{align}\label{dFX:u}
 \frac{a_1+u_{n_\alpha,m_\beta+1}^{(l+1+\alpha)}-u_{n_\alpha+1,m_\beta+1}^{(l)}}{a_1+u_{n_\alpha,m_\beta}^{(l+1+\alpha+\beta)}-u_{n_\alpha+1,m_\beta}^{(l+\beta)}}
 =\frac{a_2+u_{n_\alpha+1,m_\beta}^{(l+1+\beta)}-u_{n_\alpha+1,m_\beta+1}^{(l)}}{a_2+u_{n_\alpha,m_\beta}^{(l+1+\alpha+\beta)}-u_{n_\alpha,m_\beta+1}^{(l+\alpha)}}
\end{align}
for the additive potential $u$ as well as modified $\mathbb{Z}_\cN$ graded discrete integrable models
\begin{align}\label{dFX:v}
 a_1\left(\frac{v_{n_\alpha+1,m_\beta+1}^{(l)}}{v_{n_\alpha,m_\beta+1}^{(l+\alpha)}}-\frac{v_{n_\alpha+1,m_\beta}^{(l+1+\beta)}}{v_{n_\alpha,m_\beta}^{(l+1+\alpha+\beta)}}\right)
 =a_2\left(\frac{v_{n_\alpha+1,m_\beta+1}^{(l)}}{v_{n_\alpha+1,m_\beta}^{(l+\beta)}}-\frac{v_{n_\alpha,m_\beta+1}^{(l+1+\alpha)}}{v_{n_\alpha,m_\beta}^{(l+1+\alpha+\beta)}}\right)
\end{align}
\ese
for the quotient potential $v$, in which $l=0,1,\cdots,\cN-1$,
and we have $u_{n_\alpha,m_\beta}^{(l+\cN)}=u_{n_\alpha,m_\beta}^{(l)}$ and $v_{n_\alpha,m_\beta}^{(l+\cN)}=v_{n_\alpha,m_\beta}^{(l)}$.
Equation \eqref{dFX:v} is exactly the same as the one given in \cite{FX17};
while equation \eqref{dFX:u} is a different parametrisation, which allows us to take continuum limit.

As a remark, we note that \eqref{dGD}, i.e. the \ac{GD} equations of rank $\cN$, is actually a special case of \eqref{dFX} by taking $\alpha=\beta=0$.
Additionally, we can also prove identities
\begin{align}\label{dFX:FirstIntegral}
 \prod_{l=0}^{\cN-1}\left(a_1+u_{n_\alpha,m_\beta}^{(l+1+\alpha)}-u_{n_\alpha+1,m_\beta}^{(l)}\right)=a_1^\cN, \quad
 \prod_{l=0}^{\cN-1}\left(a_2+u_{n_\alpha,m_\beta}^{(l+1+\beta)}-u_{n_\alpha,m_\beta+1}^{(l)}\right)=a_2^\cN \quad \hbox{and} \quad
 \prod_{l=0}^{\cN-1}v_{n_\alpha,m_\beta}^{(l)}=1,
\end{align}
which are the deformations of \eqref{dGD:FirstIntegral}.
These identities are referred to as the first integrals of \eqref{dFX:u} and \eqref{dFX:v}, respectively.

The wave function $\phi$ and the adjoint wave function $\psi$ can also be changed through \eqref{dFX:CoV}.
Thus, following \eqref{dGD:Lax}, we obtain the Lax pair for \eqref{dFX:u}
\bse\label{dFX:Lax}
\begin{align}
 &\phi_{n_\alpha+1,m_\beta}^{(l)}=\left(a_1+u_{n_\alpha,m_\beta}^{(l+1+\alpha)}-u_{n_\alpha+1,m_\beta}^{(l)}\right)\phi_{n_\alpha,m_\beta}^{(l+\alpha)}+\ld\phi_{n_\alpha,m_\beta}^{(l+1+\alpha)}, \label{dFX:Lax1} \\
 &\phi_{n_\alpha,m_\beta+1}^{(l)}=\left(a_2+u_{n_\alpha,m_\beta}^{(l+1+\beta)}-u_{n_\alpha,m_\beta+1}^{(l)}\right)\phi_{n_\alpha,m_\beta}^{(l+\beta)}+\ld\phi_{n_\alpha,m_\beta}^{(l+1+\beta)}, \label{dFX:Lax2}
\end{align}
\ese
for $l=0,1,\cdots,\cN-1$, in which $\phi_{n_\alpha,m_\beta}^{(l+\cN)}=\phi_{n_\alpha,m_\beta}^{(l)}$.
The Lax pairs for equations  \eqref{dFX:u} and  \eqref{dFX:v}  can be obtained by the following transforms:
\bse\label{dFX:MT}
\begin{align}
 &a_1+u_{n_\alpha,m_\beta}^{(l+1+\alpha)}-u_{n_\alpha+1,m_\beta}^{(l)}
 =a_1\frac{v_{n_\alpha+1,m_\beta}^{(l)}}{v_{n_\alpha,m_\beta}^{(l+\alpha)}}
 =a_1\frac{\tau_{n_\alpha+1,m_\beta}^{(l+1)}\tau_{n_\alpha,m_\beta}^{(l+\alpha)}}{\tau_{n_\alpha+1,m_\beta}^{(l)}\tau_{n_\alpha,m_\beta}^{(l+1+\alpha)}}, \\
 &a_2+u_{n_\alpha,m_\beta}^{(l+1+\beta)}-u_{n_\alpha,m_\beta+1}^{(l)}
 =a_2\frac{v_{n_\alpha,m_\beta+1}^{(l)}}{v_{n_\alpha,m_\beta}^{(l+\beta)}}
 =a_2\frac{\tau_{n_\alpha,m_\beta+1}^{(l+1)}\tau_{n_\alpha,m_\beta}^{(l+\beta)}}{\tau_{n_\alpha,m_\beta+1}^{(l)}\tau_{n_\alpha,m_\beta}^{(l+1+\beta)}}.
\end{align}
\ese
These difference transforms in \eqref{dFX:MT} act as the Miura maps between the equations in \eqref{dFX},
which is a result of imposing the change of variables \eqref{dFX:CoV} on \eqref{dGD:MT}.
The linear equations in \eqref{dFX:Lax} are the $\mathbb{Z}_\cN$ graded discrete Lax pairs discussed in \cite{FX17}.
Similarly, applying the change of variables \eqref{dFX:CoV} to \eqref{dGD:Adjoint}, we obtain the adjoint Lax pair for \eqref{dFX:u}
\bse\label{dFX:Adjoint}
\begin{align}
 &\psi_{n_\alpha-1,m_\beta}^{(l)}=\left(a_1+u_{n_\alpha-1,m_\beta}^{(l)}-u_{n_\alpha,m_\beta}^{(l-1-\alpha)}\right)\psi_{n_\alpha,m_\beta}^{(l-\alpha)}+\ld\psi_{n_\alpha,m_\beta}^{(l-1-\alpha)}, \label{dFX:Adjoint1} \\
 &\psi_{n_\alpha,m_\beta-1}^{(l)}=\left(a_2+u_{n_\alpha,m_\beta-1}^{(l)}-u_{n_\alpha,m_\beta}^{(l-1-\beta)}\right)\psi_{n_\alpha,m_\beta}^{(l-\beta)}+\ld\psi_{n_\alpha,m_\beta}^{(l-1-\beta)}, \label{dFX:Adjoint2}
\end{align}
\ese
for $l=0,1,\cdots,\cN-1$ and the adjoint Lax pairs for \eqref{dFX:v} and \eqref{dFX:tau1} are derived with the help of the Miura maps in \eqref{dFX:MT}.
The adjoint Lax pair \eqref{dFX:Adjoint} was, however, not discussed in \cite{FX17}.

\begin{remark}
Without loss of generality, in the nonlinear equations \eqref{dFX} and linear equations \eqref{dFX:Lax} and \eqref{dFX:Adjoint}
we require that the indices in the superscript $(\cdot)$ attached to each variable (i.e. $u$, $v$, $\tau$, $\phi$ and $\psi$) are fixed at $0,1,\cdots,\cN-1$ subject to the corresponding periodicity condition.
\end{remark}

The Lax pair \eqref{dFX:Lax} and its adjoint \eqref{dFX:Adjoint} can be written in a more compact form,
with the help of the periodic matrix as follows
\begin{align}\label{PeriodicMatrix}
 \bSg=
 \begin{pmatrix}
 0 & 1 & & & \\
 & 0 & 1 & \\
 & & \ddots & \ddots & \\
 & & & \ddots & 1 \\
 1 & & & & 0
 \end{pmatrix}_{\cN\times\cN}.
\end{align}
By introducing notations
\begin{align*}
 \Phi_{n_\alpha,m_\beta}=\left(\phi_{n_\alpha,m_\beta}^{(0)},\phi_{n_\alpha,m_\beta}^{(1)},\cdots,\phi_{n_\alpha,m_\beta}^{(\cN-1)}\right)^\rT, \quad
 \Psi_{n_\alpha,m_\beta}=\left(\psi_{n_\alpha,m_\beta}^{(0)},\psi_{n_\alpha,m_\beta}^{(1)},\cdots,\psi_{n_\alpha,m_\beta}^{(\cN-1)}\right)^\rT,
\end{align*}
as well as
\begin{align*}
 \bU_{n_\alpha,m_\beta}=\diag\left(u_{n_\alpha,m_\beta}^{(0)},u_{n_\alpha,m_\beta}^{(1)},\cdots,u_{n_\alpha,m_\beta}^{(\cN-1)}\right),
\end{align*}
we can reformulate the linear system associated with \eqref{dFX:u}, namely \eqref{dFX:Lax}, and its adjoint \eqref{dFX:Adjoint} as
\begin{align*}
 &\Phi_{n_\alpha+1,m_\beta}=\left(\left(a_1+\bSg^{1+\alpha}\bU_{n_\alpha,m_\beta}\bSg^{-1-\alpha}-\bU_{n_\alpha+1,m_\beta}\right)\bSg^{\alpha}+\lambda\bSg^{1+\alpha}\right)\Phi_{n_\alpha,m_\beta}, \\
 &\Phi_{n_\alpha,m_\beta+1}=\left(\left(a_2+\bSg^{1+\beta}\bU_{n_\alpha,m_\beta}\bSg^{-1-\beta}-\bU_{n_\alpha,m_\beta+1}\right)\bSg^{\beta}+\lambda\bSg^{1+\beta}\right)\Phi_{n_\alpha,m_\beta},
\end{align*}
and
\begin{align*}
 &\Psi_{n_\alpha-1,m_\beta}=\left(\left(a_1+\bU_{n_\alpha-1,m_\beta}-\bSg^{-1-\alpha}\bU_{n_\alpha,m_\beta}\bSg^{1+\alpha}\right)\bSg^{-\alpha}+\lambda\bSg^{-1-\alpha}\right)\Psi_{n_\alpha,m_\beta}, \\
 &\Psi_{n_\alpha,m_\beta-1}=\left(\left(a_2+\bU_{n_\alpha,m_\beta-1}-\bSg^{-1-\beta}\bU_{n_\alpha,m_\beta}\bSg^{1+\beta}\right)\bSg^{-\beta}+\lambda\bSg^{-1-\beta}\right)\Psi_{n_\alpha,m_\beta},
\end{align*}
respectively, in which $a_1$ and $a_2$ denote $a_1\bI$ and $a_2\bI$, respectively.
If we adopt the notations
\begin{align*}
 \bV_{n_\alpha,m_\beta}=\left(v_{n_\alpha,m_\beta}^{(0)},v_{n_\alpha,m_\beta}^{(1)},\cdots,v_{n_\alpha,m_\beta}^{(\cN-1)}\right)
 \quad \hbox{and} \quad
 \bT_{n_\alpha,m_\beta}=\left(\tau_{n_\alpha,m_\beta}^{(0)},\tau_{n_\alpha,m_\beta}^{(1)},\cdots,\tau_{n_\alpha,m_\beta}^{(\cN-1)}\right),
\end{align*}
the Miura maps in \eqref{dFX:MT} can be written as their matrix form, namely
\begin{align*}
 &a_1+\bSg^{1+\alpha}\bU_{n_\alpha,m_\beta}\bSg^{-1-\alpha}-\bU_{n_\alpha+1,m_\beta} \\
 &\qquad=a_1\bV_{n_\alpha+1,m_\beta}\bSg^{\alpha}\bV_{n_\alpha,m_\beta}^{-1}\bSg^{-\alpha}
 =a_1\bSg\bT_{n_\alpha+1,m_\beta}\bSg^{-1}\bT_{n_\alpha+1,m_\beta}^{-1}
 \bSg^{\alpha}\bT_{n_\alpha,m_\beta}\bSg\bT_{n_\alpha,m_\beta}^{-1}\bSg^{-1-\alpha}, \\
 &a_2+\bSg^{1+\beta}\bU_{n_\alpha,m_\beta}\bSg^{-1-\beta}-\bU_{n_\alpha,m_\beta+1} \\
 &\qquad=a_2\bV_{n_\alpha,m_\beta+1}\bSg^{\beta}\bV_{n_\alpha,m_\beta}^{-1}\bSg^{-\beta}
 =a_2\bSg\bT_{n_\alpha,m_\beta+1}\bSg^{-1}\bT_{n_\alpha,m_\beta+1}^{-1}
 \bSg^{\beta}\bT_{n_\alpha,m_\beta}\bSg\bT_{n_\alpha,m_\beta}^{-1}\bSg^{-1-\beta},
\end{align*}
which in turn yield the matrix form of the Lax pairs and the adjoint ones for the modified equation \eqref{dFX:v} and the bilinear equation \eqref{dFX:tau1}.

The identities given in \eqref{dFX:FirstIntegral} restrict the degree of freedom of equations \eqref{dFX:u} and \eqref{dFX:v}.
Making use of the identity for $v$, one can always eliminate one component (for instance $v_{n_\alpha,m_\beta}^{(\cN-1)}$) in the modified system \eqref{dFX:v}.
However, it seems that we can only reduce the number of components in the unmodified system \eqref{dFX:u} when $(\alpha,\beta)=(0,0)$ or $(\alpha,\beta)=(\cN-1,\cN-1)$; in other cases, the system cannot be decoupled, and thus, the $u$-identities play the roles of additional constraints which the potential variable in \eqref{dFX:u} must obey. Below we list the two decoupled (i.e. $\cN-1$-component) systems of the additive potential $u$, in which the component $u_{n_\alpha,m_\beta}^{(\cN-1)}$ is eliminated.
\begin{enumerate}
\item{$(\alpha,\beta)=(0,0)$}
\begin{align*}
 &\frac{a_1+u_{n_0,m_0+1}^{(l+1)}-u_{n_0+1,m_0+1}^{(l)}}{a_1+u_{n_0,m_0}^{(l+1)}-u_{n_0+1,m_0}^{(l)}}
 =\frac{a_2+u_{n_0+1,m_0}^{(l+1)}-u_{n_0+1,m_0+1}^{(l)}}{a_2+u_{n_0,m_0}^{(l+1)}-u_{n_0,m_0+1}^{(l)}}, \quad
 l=0,1,\cdots,\cN-3, \\
 &\left(a_1+a_2+u_{n_0,m_0}^{(0)}-u_{n_0+1,m_0+1}^{(\cN-2)}\right)
 \left(a_1-a_2+u_{n_0,m_0+1}^{(\cN-2)}-u_{n_0+1,m_0}^{(\cN-2)}\right) \\
 &\qquad=\frac{a_1^\cN}{\prod_{l=0}^{\cN-3}\left(a_1+u_{n_0,m_0}^{(l+1)}-u_{n_0+1,m_0}^{(l)}\right)}
        -\frac{a_2^\cN}{\prod_{l=0}^{\cN-3}\left(a_2+u_{n_0,m_0}^{(l+1)}-u_{n_0,m_0+1}^{(l)}\right)};
\end{align*}
\item{$(\alpha,\beta)=(\cN-1,\cN-1)$}
\begin{align*}
 &\frac{a_1+u_{n_0,m_0+1}^{(0)}-u_{n_0+1,m_0+1}^{(0)}}{a_2+u_{n_0+1,m_0}^{(0)}-u_{n_0+1,m_0+1}^{(0)}}
 =\frac{a_1^\cN\prod_{l=0}^{\cN-2}\left(a_2+u_{n_0,m_0}^{(l)}-u_{n_0,m_0+1}^{(l)}\right)}
       {a_2^\cN\prod_{l=0}^{\cN-2}\left(a_1+u_{n_0,m_0}^{(l)}-u_{n_0+1,m_0}^{(l)}\right)}, \\
 &\frac{a_1+u_{n_0,m_0+1}^{(l)}-u_{n_0+1,m_0+1}^{(l)}}{a_2+u_{n_0+1,m_0}^{(l)}-u_{n_0+1,m_0+1}^{(l)}}
 =\frac{a_1+u_{n_0,m_0}^{(l-1)}-u_{n_0+1,m_0}^{(l-1)}}{a_2+u_{n_0,m_0}^{(l-1)}-u_{n_0,m_0+1}^{(l-1)}}, \quad
 l=1,2,\cdots,\cN-3.
\end{align*}
\end{enumerate}

We list some concrete examples for $\cN=2$ and $\cN=3$ within this framework in appendix \ref{S:Exampl} for reference.
Among these examples, the classes of $(\alpha,\beta)=(0,0)$, $(\alpha,\beta)=(0,\cN-1)$ and $(\alpha,\beta)=(\cN-1,\cN-1)$ amount to the discrete \ac{GD} hierarchy \cite{NPCQ92}, the discrete-time \ac{2DTL} equations \cite{Fu18a} and the discrete Schwarzian \ac{GD} hierarchy \cite{ALN12}, respectively; while the other classes of equations are the `new' integrable discrete equations which so far only appeared in \cite{FX17} and were not considered by others.
Our parametrisations/expressions of these equations could sometimes be slightly different from those given by Fordy and Xenitidis,
for convenience of constructing their \ac{DT}s  in the forthcoming sections, i.e. sections \ref{S:DT} .
In addition, by our way, we also provide the equations based on the bilinear potential as well as their Lax formalism which were not given in \cite{FX17}.

\subsection{Associated continuous equations}

By performing the same reduction, namely taking $a_3=0$, imposing $\tau_{n, m, h+\cN}=\tau_{n, m, h}$, $\phi_{n, m, h+\cN}=\lambda^{\cN}\phi_{n, m, h}$, $\psi_{n, m, h+\cN}=\lambda^{-\cN}\psi_{n, m, h}$ and introducing the change of variables \eqref{dFX:CoV}, we can from \eqref{sdKP:Lax} and \eqref{sdKP:Adjoint} derive the continuous linear equation
\begin{align}\label{cFX:Lax}
 \partial_x\phi_{n_\alpha,m_\beta}^{(l)}
 =\left(u_{n_\alpha,m_\beta}^{(l+1)}-u_{n_\alpha,m_\beta}^{(l)}\right)\phi_{n_\alpha,m_\beta}^{(l)}
 +\lambda\phi_{n_\alpha,m_\beta}^{(l+1)}
\end{align}
and its adjoint form
\begin{align}\label{cFX:Adjoint}
 \partial_x\psi_{n_\alpha,m_\beta}^{(l)}
 =\left(u_{n_\alpha,m_\beta}^{(l-1)}-u_{n_\alpha,m_\beta}^{(l)}\right)\psi_{n_\alpha,m_\beta}^{(l)}
 -\lambda\psi_{n_\alpha,m_\beta}^{(l-1)},
\end{align}
respectively, where $l=0,1,\cdots,\cN-1$, and all the variables satisfy their respective periodicity condition.
For consistency with the $\mathbb{Z}_\cN$ graded discrete Lax pairs \eqref{dFX:Lax} and \eqref{dFX:Adjoint},
we require that all the components are still fixed at $0,1,\cdots,\cN-1$.
Following the idea of deriving \eqref{sdKP:MT}, in this case we have the bilinear transforms
\begin{align}\label{cFX:BLT}
 u_{n_\alpha,m_\beta}^{(l)}=\partial_x\left(\ln\tau_{n_\alpha,m_\beta}^{(l)}\right) \quad \hbox{and} \quad
 v_{n_\alpha,m_\beta}^{(l)}=\frac{\tau_{n_\alpha,m_\beta}^{(l+1)}}{\tau_{n_\alpha,m_\beta}^{(l)}}
\end{align}
for $l=0,1,\cdots,\cN-1$, which result in the Miura maps between the additive potential $u$, quotient potential $v$
and the bilinear potential $\tau$ as follows:
\begin{align}\label{sdFX:MT}
 u_{n_\alpha,m_\beta}^{(l+1)}-u_{n_\alpha,m_\beta}^{(l)}=\partial_x\left(\ln v_{n_\alpha,m_\beta}^{(l)}\right)
 =\partial_x\left(\ln\tau_{n_\alpha,m_\beta}^{(l+1)}\right)-\partial_x\left(\ln\tau_{n_\alpha,m_\beta}^{(l)}\right).
\end{align}
Such difference relations can be used to replace the potential $u$ in \eqref{cFX:Lax} and \eqref{cFX:Adjoint}
by the other two potentials, leading to continuous linear equation in their respective semi-discrete Lax pair.
Equations \eqref{cFX:Lax} and \eqref{cFX:Adjoint} can also be written in matrix form
with the help of the periodic matrix $\bSg$ as follows:
\begin{align*}
 &\partial_x\Phi_{n_\alpha,m_\beta}=\left(\bSg\bU_{n_\alpha,m_\beta}\bSg^{-1}-\bU_{n_\alpha,m_\beta}+\lambda\bSg\right)\Phi_{n_\alpha,m_\beta},
 &\partial_x\Psi_{n_\alpha,m_\beta}=\left(\bSg^{-1}\bU_{n_\alpha,m_\beta}\bSg-\bU_{n_\alpha,m_\beta}+\lambda\bSg^{-1}\right)\Psi_{n_\alpha,m_\beta}.
\end{align*}
Similarly, the Miura maps \eqref{sdFX:MT} can be written as
\begin{align*}
 \bSg\bU_{n_\alpha,m_\beta}\bSg^{-1}-\bU_{n_\alpha,m_\beta}=\partial_x\ln\bV_{n_\alpha,m_\beta}
 =\partial_x\ln\left(\bSg\bT_{n_\alpha,m_\beta}\bSg^{-1}\right)-\partial_x\ln\bT_{n_\alpha,m_\beta},
\end{align*}
leading the spectral problems of $\bU$ to the $\bV$- and $\bT$-counterparts.
Such a matrix structure actually coincides with the Lax structure of the \ac{2DTL} equations, see \cite{FG80};
in other words, the continuous $\mathbb{Z}_\cN$ graded Lax matrices deep down describe the structure of
the affine Lie algebra $A_\cN^{(1)}$, as can be seen that these Lax matrices have trace zero due to the periodicity conditions.

The compatibility condition between \eqref{cFX:Lax} and \eqref{dFX:Lax1} gives rise to
semi-discrete equations of $x$ and $n_\alpha$ as follows:
\bse\label{sdFX-1}
\begin{align}
 &\partial_x\left(u_{n_\alpha,m_\beta}^{(l+1+\alpha)}-u_{n_\alpha+1,m_\beta}^{(l)}\right)
 =\left(a_1+u_{n_\alpha,m_\beta}^{(l+1+\alpha)}-u_{n_\alpha+1,m_\beta}^{(l)}\right)
 \left(u_{n_\alpha,m_\beta}^{(l+\alpha)}-u_{n_\alpha,m_\beta}^{(l+1+\alpha)}-u_{n_\alpha+1,m_\beta}^{(l)}+u_{n_\alpha+1,m_\beta}^{(l+1)}\right), \label{sdFX:u} \\
 &\partial_x\left(\ln v_{n_\alpha,m_\beta}^{(l+1+\alpha)}-\ln v_{n_\alpha+1,m_\beta}^{(l)}\right)
 =a_1\left(\frac{v_{n_\alpha+1,m_\beta}^{(l+1)}}{v_{n_\alpha,m_\beta}^{(l+1+\alpha)}}-\frac{v_{n_\alpha+1,m_\beta}^{(l)}}{v_{n_\alpha,m_\beta}^{(l+\alpha)}}\right), \label{sdFX:v} \\
 &\tau_{n_\alpha+1,m_\beta}^{(l)}\partial_x\tau_{n_\alpha,m_\beta}^{(l+1+\alpha)}-\tau_{n_\alpha,m_\beta}^{(l+1+\alpha)}\partial_x\tau_{n_\alpha+1,m_\beta}^{(l)}
 =a_1\left(\tau_{n_\alpha,m_\beta}^{(l+\alpha)}\tau_{n_\alpha+1,m_\beta}^{(l+1)}-\tau_{n_\alpha,m_\beta}^{(l+1+\alpha)}\tau_{n_\alpha+1,m_\beta}^{(l)}\right). \label{sdFX:tau}
\end{align}
\ese
for $l=0,1,\cdots,\cN-1$.

The compatibility condition between \eqref{cFX:Lax} and \eqref{dFX:Lax2} gives rise to
semi-discrete equations of $x$ and $m_\beta$ as follows:
\bse\label{sdFX-2}
\begin{align}
 &\partial_x\left(u_{n_\alpha,m_\beta}^{(l+1+\beta)}-u_{n_\alpha,m_\beta+1}^{(l)}\right)
 =\left(a_2+u_{n_\alpha,m_\beta}^{(l+1+\beta)}-u_{n_\alpha,m_\beta+1}^{(l)}\right)
 \left(u_{n_\alpha,m_\beta}^{(l+\beta)}-u_{n_\alpha,m_\beta}^{(l+1+\beta)}-u_{n_\alpha,m_\beta+1}^{(l)}+u_{n_\alpha,m_\beta+1}^{(l+1)}\right), \label{sdFX:u} \\
 &\partial_x\left(\ln v_{n_\alpha,m_\beta}^{(l+1+\beta)}-\ln v_{n_\alpha,m_\beta+1}^{(l)}\right)
 =a_2\left(\frac{v_{n_\alpha,m_\beta+1}^{(l+1)}}{v_{n_\alpha,m_\beta}^{(l+1+\beta)}}-\frac{v_{n_\alpha,m_\beta+1}^{(l)}}{v_{n_\alpha,m_\beta}^{(l+\beta)}}\right), \label{sdFX:v} \\
 &\tau_{n_\alpha,m_\beta+1}^{(l)}\partial_x\tau_{n_\alpha,m_\beta}^{(l+1+\beta)}-\tau_{n_\alpha,m_\beta}^{(l+1+\beta)}\partial_x\tau_{n_\alpha,m_\beta+1}^{(l)}
 =a_2\left(\tau_{n_\alpha,m_\beta}^{(l+\beta)}\tau_{n_\alpha,m_\beta+1}^{(l+1)}-\tau_{n_\alpha,m_\beta}^{(l+1+\beta)}\tau_{n_\alpha,m_\beta+1}^{(l)}\right). \label{sdFX:tau}
\end{align}
\ese
for $l=0,1,\cdots,\cN-1$.

These equations also arise as the compatibility of \eqref{cFX:Adjoint} and \eqref{dFX:Adjoint1},  \eqref{cFX:Adjoint} and \eqref{dFX:Adjoint2}, respectively.
By direct calculation, one can verify that equations \eqref{sdFX-1} and \eqref{sdFX-2} are compatible with \eqref{dFX}, respectively.

\section{Darboux transformations}\label{S:DT}

\subsection{Basic Darboux transformations}

We now consider the Darboux transformations of the $\mathbb{Z}_\cN$ graded discrete integrable systems shown above in order to provide exact solutions.
The \ac{DT}s will be constructed on the level of tau function,
which then naturally gain those additive and quotient potentials (or say unmodified and modified potentials) through the transforms \eqref{cFX:BLT}.

As explained in section \ref{S:ZN} that with the help of the Miura maps, the $\mathbb{Z}_\cN$ graded discrete Lax pairs (including associated continuous part)  in tau function are given by
\bse\label{dFX:Lax-tau}
\begin{align}
 &\phi_{n_\alpha+1,m_\beta}^{(l)}=a_1\frac{\tau_{n_\alpha,m_\beta}^{(l+\alpha)}\tau_{n_\alpha+1,m_\beta}^{(l+1)}}
 {\tau_{n_\alpha,m_\beta}^{(l+\alpha+1)}\tau_{n_\alpha+1,m_\beta}^{(l)}}\phi_{n_\alpha,m_\beta}^{(l+\alpha)}+\ld\phi_{n_\alpha,m_\beta}^{(l+\alpha+1)}, \label{dFX:Lax1-tau} \\
 &\phi_{n_\alpha,m_\beta+1}^{(l)}=a_2\frac{\tau_{n_\alpha,m_\beta}^{(l+\beta)}\tau_{n_\alpha,m_\beta+1}^{(l+1)}}
 {\tau_{n_\alpha,m_\beta}^{(l+\beta+1)}\tau_{n_\alpha,m_\beta+1}^{(l)}}\phi_{n_\alpha,m_\beta}^{(l+\beta)}+\ld\phi_{n_\alpha,m_\beta}^{(l+\beta+1)}, \label{dFX:Lax2-tau}\\
 &\partial_x\phi_{n_\alpha,m_\beta}^{(l)}=\left(\partial_x\ln\tau_{n_\alpha,m_\beta}^{(l+1)}-\partial_x\ln\tau_{n_\alpha,m_\beta}^{(l)}\right)
 \phi_{n_\alpha,m_\beta}^{(l)}+\ld\phi_{n_\alpha,m_\beta}^{(l+1)},\label{dFX:Lax3-tau}
\end{align}
\ese
where $l=0,1,\cdots,\cN-1$.
\begin{proposition}\label{dFX:1DT}
Let the non-zero function $\theta_{n_\alpha,m_\beta}^{\left(l,1\right)}$ satisfy the linear equations \eqref{dFX:Lax-tau} for given $\tau_{n_\alpha,m_\beta}^{(l)}$ with $\ld=\ld_1$,  for $l=0, 1, \cdots, \cN-1$. Then the following \ac{DT}s
\begin{align}
\begin{split}\label{dFX:equivalence}
\phi_{n_\alpha,m_\beta}^{\left(l\right)}\rightarrow \wt{\phi}_{n_\alpha,m_\beta}^{\left(l\right)}=~&\ld\phi_{n_\alpha,m_\beta}^{\left(l+\alpha+\beta+1\right)}-\ld_1\theta_{n_\alpha,m_\beta}^{\left(l+\alpha+\beta+1,1\right)}
{\theta_{n_\alpha,m_\beta}^{\left(l+\alpha+\beta,1\right)}}^{-1}\phi_{n_\alpha,m_\beta}^{\left(l+\alpha+\beta\right)}\\
=~&\partial_x\phi_{n_\alpha,m_\beta}^{(l+\alpha+\beta)}-\partial_x\theta_{n_\alpha,m_\beta}^{(l+\alpha+\beta,1)}
{\theta_{n_\alpha,m_\beta}^{\left(l+\alpha+\beta,1\right)}}^{-1}\phi_{n_\alpha,m_\beta}^{\left(l+\alpha+\beta\right)},
\end{split}\\
\tau_{n_\alpha,m_\beta}^{(l)}\rightarrow\wt{\tau}_{n_\alpha,m_\beta}^{(l)}=~&\theta_{n_\alpha,m_\beta}^{\left(l+\alpha+\beta,1\right)}\tau_{n_\alpha,m_\beta}^{(l+\alpha+\beta)},\label{dFX:tau}
\end{align}
leave \eqref{dFX:Lax-tau} invariant, where the component counter $l=0, 1, \cdots, \cN-1$, and the parameters  $\alpha,\beta\in\{0,1,\cdots,\cN-1\}$.
\end{proposition}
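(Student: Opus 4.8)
The plan is to verify directly that the transformed quantities $\wt{\phi}$ and $\wt{\tau}$ defined by \eqref{dFX:equivalence}--\eqref{dFX:tau} again satisfy the linear system \eqref{dFX:Lax-tau} with the \emph{same} spectral parameter $\ld$, using the fact that $\theta^{(\cdot,1)}$ solves that system at $\ld=\ld_1$. The two alternative expressions for $\wt{\phi}^{(l)}$ in \eqref{dFX:equivalence} are themselves a consistency requirement: their equality is exactly the statement that
$\ld\phi^{(l+\alpha+\beta+1)}-\ld\phi^{(l+\alpha+\beta)}\theta^{-1}\cdot(\dots) = \partial_x\phi^{(l+\alpha+\beta)}-\partial_x\theta\cdot\theta^{-1}\phi^{(l+\alpha+\beta)}$, which follows by applying \eqref{dFX:Lax3-tau} to $\phi$ (with parameter $\ld$) and to $\theta^{(\cdot,1)}$ (with parameter $\ld_1$) and subtracting; so I would dispose of that identity first as a preliminary observation. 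After that, the core is to check invariance of each of the three equations \eqref{dFX:Lax1-tau}, \eqref{dFX:Lax2-tau}, \eqref{dFX:Lax3-tau} in turn.

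Concretely, for \eqref{dFX:Lax3-tau} I would substitute $\wt\tau^{(l)}=\theta^{(l+\alpha+\beta,1)}\tau^{(l+\alpha+\beta)}$ and use the product rule: $\partial_x\ln\wt\tau^{(l+1)}-\partial_x\ln\wt\tau^{(l)} = \big(\partial_x\ln\theta^{(l+\alpha+\beta+1,1)}-\partial_x\ln\theta^{(l+\alpha+\beta,1)}\big) + \big(\partial_x\ln\tau^{(l+\alpha+\beta+1)}-\partial_x\ln\tau^{(l+\alpha+\beta)}\big)$. The second bracket is the old coefficient in \eqref{dFX:Lax3-tau} at shifted index; the first bracket is handled by the equation \eqref{dFX:Lax3-tau} that $\theta^{(\cdot,1)}$ itself satisfies, which lets me trade $\partial_x\ln\theta$-differences for $\theta$-ratios times $\ld_1$. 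Then one feeds $\wt\phi$ (in its $\partial_x$-form) into the candidate equation and checks it reduces to an identity after using \eqref{dFX:Lax3-tau} for $\phi$ and for $\theta^{(\cdot,1)}$; this is the standard Darboux/Crum computation where the $\partial_x\theta\,\theta^{-1}$ term is precisely engineered to cancel. For the two discrete equations \eqref{dFX:Lax1-tau} and \eqref{dFX:Lax2-tau} I would use the $\ld$-form of $\wt\phi$ instead, shift indices by $\alpha+\beta$ (respectively accounting for the $n_\alpha$- or $m_\beta$-shift hitting the $\theta$-factor in $\wt\tau$), and substitute; the coefficient $a_1\wt\tau^{(l+\alpha)}\wt\tau^{(l+1)}_{n_\alpha+1}/(\wt\tau^{(l+\alpha+1)}\wt\tau^{(l)}_{n_\alpha+1})$ expands into the old coefficient at a shifted superscript times a ratio of shifted $\theta$'s, and that $\theta$-ratio is governed by \eqref{dFX:Lax1-tau} (resp. \eqref{dFX:Lax2-tau}) applied to $\theta^{(\cdot,1)}$ at $\ld_1$. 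Matching the two sides then again comes down to the linear relation among $\phi^{(l+\alpha+\beta)}$, $\phi^{(l+\alpha+\beta+1)}$, $\theta^{(l+\alpha+\beta,1)}$, $\theta^{(l+\alpha+\beta+1,1)}$ and their single shifts in $n_\alpha$ or $m_\beta$, which holds because both $\phi$ and $\theta^{(\cdot,1)}$ obey the same equation (at their respective parameters) and the factor $\ld-\ld_1$ appears symmetrically.

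The main obstacle is purely bookkeeping: keeping the superscript index arithmetic straight modulo $\cN$, since the shift in \eqref{dFX:tau} is by $\alpha+\beta$, the two lattice directions carry further internal shifts by $\alpha$ and $\beta$ respectively, and the periodicity $\tau^{(l+\cN)}=\tau^{(l)}$, $\phi^{(l+\cN)}=\phi^{(l)}$ must be invoked to keep every component in $\{0,\dots,\cN-1\}$. I do not expect any genuine analytic difficulty — once the index substitutions are organised, each of the three invariance checks collapses to a two-line cancellation of the classical Darboux type, and the equivalence of the two forms of $\wt\phi$ in \eqref{dFX:equivalence} is what guarantees the discrete and continuous pieces transform compatibly. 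I would therefore present the $x$-part in full detail and then remark that \eqref{dFX:Lax1-tau} and \eqref{dFX:Lax2-tau} follow by the identical argument with $\partial_x$ replaced by the corresponding shift operator and $\ld_1$ in place of $\partial_x\theta\,\theta^{-1}$-type terms.
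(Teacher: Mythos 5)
Your proposal is correct and follows essentially the same route as the paper's own proof: first establish the equivalence of the two expressions for $\wt\phi$ by combining \eqref{dFX:Lax3-tau} for $\phi$ at $\ld$ with the same equation for $\theta^{(\cdot,1)}$ at $\ld_1$, then verify invariance of each of the three equations in \eqref{dFX:Lax-tau} by direct substitution, using the shifted linear equations satisfied by $\theta^{(\cdot,1)}$ to expand the transformed coefficients built from $\wt\tau$. The only (immaterial) difference is ordering: the paper treats the discrete equations \eqref{dFX:Lax1-tau}--\eqref{dFX:Lax2-tau} in detail first and the $x$-part last, whereas you propose the reverse.
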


\begin{proof}
The proof of the \ac{DT}s is a straightforward computation. Firstly, we prove that the two expressions (discrete and continuous expression) for $\wt\phi_{n_\alpha,m_\beta}^{(l)}$ in the \ac{DT} \eqref{dFX:equivalence} are equal.
From the continuous part in linear system \eqref{dFX:Lax3-tau} when $l \rightarrow l+\alpha+\beta$, we have
\begin{align*}
\partial_x\phi_{n_\alpha,m_\beta}^{(l+\alpha+\beta)}{\phi_{n_\alpha,m_\beta}^{(l+\alpha+\beta)}}^{-1}
=\left(\frac{\partial_x\tau_{n_\alpha,m_\beta}^{(l+\alpha+\beta+1)}}{\tau_{n_\alpha,m_\beta}^{(l+\alpha+\beta+1)}}
-\frac{\partial_x\tau_{n_\alpha,m_\beta}^{(l+\alpha+\beta)}}{\tau_{n_\alpha,m_\beta}^{(l+\alpha+\beta)}}\right)
+\lambda\phi_{n_\alpha,m_\beta}^{(l+\alpha+\beta+1)}{\phi_{n_\alpha,m_\beta}^{(l+\alpha+\beta)}}^{-1}.
\end{align*}
Similarly, when $\ld=\ld_1$, we have
\begin{align*}
\partial_x{\theta}_{n_\alpha,m_\beta}^{(l+\alpha+\beta,1)}{{\theta}_{n_\alpha,m_\beta}^{(l+\alpha+\beta,1)}}^{-1}
=\left(\frac{\partial_x\tau_{n_\alpha,m_\beta}^{(l+\alpha+\beta+1)}}{\tau_{n_\alpha,m_\beta}^{(l+\alpha+\beta+1)}}
-\frac{\partial_x\tau_{n_\alpha,m_\beta}^{(l+\alpha+\beta)}}{\tau_{n_\alpha,m_\beta}^{(l+\alpha+\beta)}}\right)
+\ld_1{\theta}_{n_\alpha,m_\beta}^{(l+\alpha+\beta+1,1)}{{\theta}_{n_\alpha,m_\beta}^{(l+\alpha+\beta,1)}}^{-1}.
\end{align*}
Eliminating the $\tau$ function in the above two equations, then we get the two expressions for $\wt\phi_{n_\alpha,m_\beta}^{(l)}$ in the \ac{DT} \eqref{dFX:equivalence}.

Substituting the discrete expression (the first expression in \eqref{dFX:equivalence}) for $\wt\phi_{n_\alpha,m_\beta}^{(l)}$ into \eqref{dFX:Lax1-tau} we get
\begin{align}\label{dFX:1-DT-1}
\begin{split}
&\ld\left(\phi_{n_\alpha+1,m_\beta}^{\left(l+\alpha+\beta+1\right)}-\ld\phi_{n_\alpha,m_\beta}^{\left(l+2\alpha+\beta+2\right)}+\ld_1\theta_{n_\alpha,m_\beta}^{\left(l+2\alpha+\beta+2,1\right)}
{\theta_{n_\alpha,m_\beta}^{\left(l+2\alpha+\beta+1,1\right)}}^{-1}\phi_{n_\alpha,m_\beta}^{\left(l+2\alpha+\beta+1\right)}\right)-\ld_1\theta_{n_\alpha+1,m_\beta}^{\left(l+\alpha+\beta+1,1\right)}
{\theta_{n_\alpha+1,m_\beta}^{\left(l+\alpha+\beta,1\right)}}^{-1}\phi_{n_\alpha+1,m_\beta}^{\left(l+\alpha+\beta\right)}\\
&=a_1\frac{\wt{\tau}_{n_\alpha,m_\beta}^{(l+\alpha)}\wt{\tau}_{n_\alpha+1,m_\beta}^{(l+1)}}
 {\wt{\tau}_{n_\alpha,m_\beta}^{(l+\alpha+1)}\wt{\tau}_{n_\alpha+1,m_\beta}^{(l)}}\left(\ld\phi_{n_\alpha,m_\beta}^{\left(l+\alpha+\beta+1\right)}-\ld_1\theta_{n_\alpha,m_\beta}^{\left(l+\alpha+\beta+1,1\right)}
{\theta_{n_\alpha,m_\beta}^{\left(l+\alpha+\beta,1\right)}}^{-1}\phi_{n_\alpha,m_\beta}^{\left(l+\alpha+\beta\right)}\right).
\end{split}
\end{align}
Taking into account the linear equation \eqref{dFX:Lax1-tau} when $l \rightarrow l+\alpha+\beta+1$ and $l \rightarrow l+\alpha+\beta$ respectively
\begin{align*}
\phi_{n_\alpha+1,m_\beta}^{(l+\alpha+\beta+1)}&=a_1\frac{\tau_{n_\alpha,m_\beta}^{(l+2\alpha+\beta+1)}\tau_{n_\alpha+1,m_\beta}^{(l+\alpha+\beta+2)}}
 {\tau_{n_\alpha,m_\beta}^{(l+2\alpha+\beta+2)}\tau_{n_\alpha+1,m_\beta}^{(l+\alpha+\beta+1)}}\phi_{n_\alpha,m_\beta}^{(l+2\alpha+\beta+1)}+\ld\phi_{n_\alpha,m_\beta}^{(l+2\alpha+\beta+2)}, \\
 \phi_{n_\alpha+1,m_\beta}^{(l+\alpha+\beta)}&=a_1\frac{\tau_{n_\alpha,m_\beta}^{(l+2\alpha+\beta)}\tau_{n_\alpha+1,m_\beta}^{(l+\alpha+\beta+1)}}
 {\tau_{n_\alpha,m_\beta}^{(l+2\alpha+\beta+1)}\tau_{n_\alpha+1,m_\beta}^{(l+\alpha+\beta)}}\phi_{n_\alpha,m_\beta}^{(l+2\alpha+\beta)}+\ld\phi_{n_\alpha,m_\beta}^{(l+2\alpha+\beta+1)},
\end{align*}
we can rewrite \eqref{dFX:1-DT-1} in the form
\begin{align}\label{dFX:1-DT-2}
\begin{split}
&\ld\left(a_1\frac{\tau_{n_\alpha,m_\beta}^{(l+2\alpha+\beta+1)}\tau_{n_\alpha+1,m_\beta}^{(l+\alpha+\beta+2)}}
 {\tau_{n_\alpha,m_\beta}^{(l+2\alpha+\beta+2)}\tau_{n_\alpha+1,m_\beta}^{(l+\alpha+\beta+1)}}\phi_{n_\alpha,m_\beta}^{(l+2\alpha+\beta+1)}+\ld_1\theta_{n_\alpha,m_\beta}^{\left(l+2\alpha+\beta+2,1\right)}
{\theta_{n_\alpha,m_\beta}^{\left(l+2\alpha+\beta+1,1\right)}}^{-1}\phi_{n_\alpha,m_\beta}^{\left(l+2\alpha+\beta+1\right)}\right)\\
&-\ld_1\theta_{n_\alpha+1,m_\beta}^{\left(l+\alpha+\beta+1,1\right)}{\theta_{n_\alpha+1,m_\beta}^{\left(l+\alpha+\beta,1\right)}}^{-1}
\left(a_1\frac{\tau_{n_\alpha,m_\beta}^{(l+2\alpha+\beta)}\tau_{n_\alpha+1,m_\beta}^{(l+\alpha+\beta+1)}}
 {\tau_{n_\alpha,m_\beta}^{(l+2\alpha+\beta+1)}\tau_{n_\alpha+1,m_\beta}^{(l+\alpha+\beta)}}\phi_{n_\alpha,m_\beta}^{(l+2\alpha+\beta)}+\ld\phi_{n_\alpha,m_\beta}^{(l+2\alpha+\beta+1)}\right)\\
&=a_1\frac{\wt{\tau}_{n_\alpha,m_\beta}^{(l+\alpha)}\wt{\tau}_{n_\alpha+1,m_\beta}^{(l+1)}}
 {\wt{\tau}_{n_\alpha,m_\beta}^{(l+\alpha+1)}\wt{\tau}_{n_\alpha+1,m_\beta}^{(l)}}\left(\ld\phi_{n_\alpha,m_\beta}^{\left(l+2\alpha+\beta+1\right)}-\ld_1\theta_{n_\alpha,m_\beta}^{\left(l+2\alpha+\beta+1,1\right)}
{\theta_{n_\alpha,m_\beta}^{\left(l+2\alpha+\beta,1\right)}}^{-1}\phi_{n_\alpha,m_\beta}^{\left(l+2\alpha+\beta\right)}\right).
\end{split}
\end{align}
Taking into account the linear equation \eqref{dFX:Lax1-tau} with $\ld=\ld_1$ when $l \rightarrow l+\alpha+\beta+1$ and $l \rightarrow l+\alpha+\beta$ respectively
\begin{align*}
\theta_{n_\alpha+1,m_\beta}^{(l+\alpha+\beta+1,1)}&=a_1\frac{\tau_{n_\alpha,m_\beta}^{(l+2\alpha+\beta+1)}\tau_{n_\alpha+1,m_\beta}^{(l+\alpha+\beta+2)}}
 {\tau_{n_\alpha,m_\beta}^{(l+2\alpha+\beta+2)}\tau_{n_\alpha+1,m_\beta}^{(l+\alpha+\beta+1)}}\theta_{n_\alpha,m_\beta}^{(l+2\alpha+\beta+1,1)}+\ld_1\theta_{n_\alpha,m_\beta}^{(l+2\alpha+\beta+2,1)}, \\
 \theta_{n_\alpha+1,m_\beta}^{(l+\alpha+\beta,1)}&=a_1\frac{\tau_{n_\alpha,m_\beta}^{(l+2\alpha+\beta)}\tau_{n_\alpha+1,m_\beta}^{(l+\alpha+\beta+1)}}
 {\tau_{n_\alpha,m_\beta}^{(l+2\alpha+\beta+1)}\tau_{n_\alpha+1,m_\beta}^{(l+\alpha+\beta)}}\theta_{n_\alpha,m_\beta}^{(l+2\alpha+\beta,1)}+\ld_1\theta_{n_\alpha,m_\beta}^{(l+2\alpha+\beta+1,1)},
\end{align*}
which can be equivalently written as
\begin{align*}
&a_1\frac{\tau_{n_\alpha,m_\beta}^{(l+2\alpha+\beta+1)}\tau_{n_\alpha+1,m_\beta}^{(l+\alpha+\beta+2)}}
 {\tau_{n_\alpha,m_\beta}^{(l+2\alpha+\beta+2)}\tau_{n_\alpha+1,m_\beta}^{(l+\alpha+\beta+1)}}\phi_{n_\alpha,m_\beta}^{(l+2\alpha+\beta+1)}
 +\ld_1\theta_{n_\alpha,m_\beta}^{(l+2\alpha+\beta+2,1)}{\theta_{n_\alpha,m_\beta}^{(l+2\alpha+\beta+1,1)}}^{-1}\phi_{n_\alpha,m_\beta}^{(l+2\alpha+\beta+1)}
\\
&=\theta_{n_\alpha+1,m_\beta}^{(l+\alpha+\beta+1,1)}{\theta_{n_\alpha,m_\beta}^{(l+2\alpha+\beta+1,1)}}^{-1}\phi_{n_\alpha,m_\beta}^{(l+2\alpha+\beta+1)}, \\
&a_1\frac{\theta_{n_\alpha,m_\beta}^{(l+2\alpha+\beta,1)}\tau_{n_\alpha,m_\beta}^{(l+2\alpha+\beta)}\theta_{n_\alpha+1,m_\beta}^{(l+\alpha+\beta+1,1)}\tau_{n_\alpha+1,m_\beta}^{(l+\alpha+\beta+1)}}
 {{\theta_{n_\alpha,m_\beta}^{(l+2\alpha+\beta+1,1)}}\tau_{n_\alpha,m_\beta}^{(l+2\alpha+\beta+1)}{\theta_{n_\alpha+1,m_\beta}^{(l+\alpha+\beta,1)}}\tau_{n_\alpha+1,m_\beta}^{(l+\alpha+\beta)}}
\phi_{n_\alpha,m_\beta}^{(l+2\alpha+\beta+1)}
=\frac{\theta_{n_\alpha+1,m_\beta}^{(l+\alpha+\beta+1,1)}}
 {{\theta_{n_\alpha,m_\beta}^{(l+2\alpha+\beta+1,1)}}}
\phi_{n_\alpha,m_\beta}^{(l+2\alpha+\beta+1)}
-\ld_1\frac{\theta_{n_\alpha+1,m_\beta}^{(l+\alpha+\beta+1,1)}}
 {{\theta_{n_\alpha+1,m_\beta}^{(l+\alpha+\beta,1)}}}
\phi_{n_\alpha,m_\beta}^{(l+2\alpha+\beta+1)},
\end{align*}
we can represent \eqref{dFX:1-DT-2} in the form
\begin{align*}
\begin{split}
&a_1\frac{\theta_{n_\alpha,m_\beta}^{\left(l+2\alpha+\beta,1\right)}\tau_{n_\alpha,m_\beta}^{(l+2\alpha+\beta)}
\theta_{n_\alpha+1,m_\beta}^{\left(l+\alpha+\beta+1,1\right)}\tau_{n_\alpha+1,m_\beta}^{(l+\alpha+\beta+1)}}
{\theta_{n_\alpha,m_\beta}^{\left(l+2\alpha+\beta+1,1\right)}\tau_{n_\alpha,m_\beta}^{(l+2\alpha+\beta+1)}
\theta_{n_\alpha+1,m_\beta}^{\left(l+\alpha+\beta,1\right)}\tau_{n_\alpha+1,m_\beta}^{(l+\alpha+\beta)}}
\left(\ld\phi_{n_\alpha,m_\beta}^{\left(l+2\alpha+\beta+1\right)}-\ld_1\theta_{n_\alpha,m_\beta}^{\left(l+2\alpha+\beta+1,1\right)}
{\theta_{n_\alpha,m_\beta}^{\left(l+2\alpha+\beta,1\right)}}^{-1}\phi_{n_\alpha,m_\beta}^{\left(l+2\alpha+\beta\right)}\right)\\
&=a_1\frac{\wt{\tau}_{n_\alpha,m_\beta}^{(l+\alpha)}\wt{\tau}_{n_\alpha+1,m_\beta}^{(l+1)}}
 {\wt{\tau}_{n_\alpha,m_\beta}^{(l+\alpha+1)}\wt{\tau}_{n_\alpha+1,m_\beta}^{(l)}}\left(\ld\phi_{n_\alpha,m_\beta}^{\left(l+2\alpha+\beta+1\right)}-\ld_1\theta_{n_\alpha,m_\beta}^{\left(l+2\alpha+\beta+1,1\right)}
{\theta_{n_\alpha,m_\beta}^{\left(l+2\alpha+\beta,1\right)}}^{-1}\phi_{n_\alpha,m_\beta}^{\left(l+2\alpha+\beta\right)}\right),
\end{split}
\end{align*}
which is evidently satisfied by virtue of the discrete \ac{DT} of $\tau$ in \eqref{dFX:tau}.

Similarly, the $\wt\phi_{n_\alpha,m_\beta}^{(l)}$ satisfies \eqref{dFX:Lax2-tau} with the discrete \ac{DT}s given by \eqref{dFX:equivalence} and \eqref{dFX:tau}.

Next, we prove that $\wt\phi_{n_\alpha,m_\beta}^{(l)}$ satisfies \eqref{dFX:Lax3-tau} with the \ac{DT}s given by \eqref{dFX:equivalence} and \eqref{dFX:tau}.
Substituting the discrete and continuous expression in \eqref{dFX:equivalence}) for $\wt\phi_{n_\alpha,m_\beta}^{(l)}$  simultaneously into \eqref{dFX:Lax3-tau} we get
\begin{align*}
\begin{split}
&\partial_x \left(\ld\phi_{n_\alpha,m_\beta}^{\left(l+\alpha+\beta+1\right)}\!-\!\ld_1\theta_{n_\alpha,m_\beta}^{\left(l+\alpha+\beta+1,1\right)}
{\theta_{n_\alpha,m_\beta}^{\left(l+\alpha+\beta,1\right)}}^{-1}\phi_{n_\alpha,m_\beta}^{\left(l+\alpha+\beta\right)}\right)
\!-\!\ld\left(\partial_x\phi_{n_\alpha,m_\beta}^{(l+\alpha+\beta+1)}-\partial_x\theta_{n_\alpha,m_\beta}^{(l+\alpha+\beta+1,1)}
{\theta_{n_\alpha,m_\beta}^{\left(l+\alpha+\beta+1,1\right)}}^{-1}\phi_{n_\alpha,m_\beta}^{\left(l+\alpha+\beta+1\right)}\right)\\
&=\left(\partial_x\ln\wt{\tau}_{n_\alpha,m_\beta}^{(l+1)}-\partial_x\ln\wt{\tau}_{n_\alpha,m_\beta}^{(l)})\right)
 \wt{\phi}_{n_\alpha,m_\beta}^{(l)}.
\end{split}
\end{align*}
After obvious simplifications we may represent the above equation in the form
\begin{align}\label{dFX:1-DT-3}
\left(\frac{\partial_x\theta_{n_\alpha,m_\beta}^{\left(l+\alpha+\beta+1,1\right)}}{\theta_{n_\alpha,m_\beta}^{\left(l+\alpha+\beta+1,1\right)}}
-\ld_1\frac{\theta_{n_\alpha,m_\beta}^{\left(l+\alpha+\beta+1,1\right)}}{\theta_{n_\alpha,m_\beta}^{\left(l+\alpha+\beta,1\right)}}\right)\wt{\phi}_{n_\alpha,m_\beta}^{(l)}
=\left(\partial_x\ln\wt{\tau}_{n_\alpha,m_\beta}^{(l+1)}-\partial_x\ln\wt{\tau}_{n_\alpha,m_\beta}^{(l)})\right)
\wt{\phi}_{n_\alpha,m_\beta}^{(l)}.
 \end{align}
Taking into account the linear equation \eqref{dFX:Lax3-tau} with $\ld=\ld_1$ when $l \rightarrow l+\alpha+\beta$
\begin{align*}
\partial_x\theta_{n_\alpha,m_\beta}^{(l+\alpha+\beta,1)}=\left(\partial_x\ln\tau_{n_\alpha,m_\beta}^{(l+\alpha+\beta+1,1)}-\partial_x\ln\tau_{n_\alpha,m_\beta}^{(l+\alpha+\beta)}\right)
 \theta_{n_\alpha,m_\beta}^{(l+\alpha+\beta,0)}+\ld_1\theta_{n_\alpha,m_\beta}^{(l+\alpha+\beta+1,1)},
\end{align*}
which can be equivalently written as
\begin{align*}
\frac{\partial_x\theta_{n_\alpha,m_\beta}^{(l+\alpha+\beta,1)}}{\theta_{n_\alpha,m_\beta}^{(l+\alpha+\beta,1)}}
-\ld_1\frac{\theta_{n_\alpha,m_\beta}^{(l+\alpha+\beta+1,1)}}{\theta_{n_\alpha,m_\beta}^{(l+\alpha+\beta,1)}}
=\left(\partial_x\ln\tau_{n_\alpha,m_\beta}^{(l+\alpha+\beta+1)}-\partial_x\ln\tau_{n_\alpha,m_\beta}^{(l+\alpha+\beta)}\right),
\end{align*}
we can rewrite \eqref{dFX:1-DT-3} in the form
\begin{align*}
\left(\partial_x\ln\left(\theta_{n_\alpha,m_\beta}^{(l+\alpha+\beta+1)}\tau_{n_\alpha,m_\beta}^{(l+\alpha+\beta+1)}\right)
-\partial_x\ln\left(\theta_{n_\alpha,m_\beta}^{(l+\alpha+\beta)}\tau_{n_\alpha,m_\beta}^{(l+\alpha+\beta)}\right)\right)\wt{\phi}_{n_\alpha,m_\beta}^{(l)}
=\left(\partial_x\ln\wt{\tau}_{n_\alpha,m_\beta}^{(l+1)}-\partial_x\ln\wt{\tau}_{n_\alpha,m_\beta}^{(l)}\right)\wt{\phi}_{n_\alpha,m_\beta}^{(l)}
\end{align*}
which is evidently satisfied by virtue of the continuous \ac{DT} of $\tau$ in \eqref{dFX:tau}.
\end{proof}

\subsection{N-fold Darboux transformations} It is evidence that the \ac{DT}s in Proposition \ref{dFX:1DT} may be repeated an arbitrary number of times. For the second step of this process we have
\begin{align}
\begin{split}\label{dFX:2DT-3}
{\phi[2]}_{n_\alpha,m_\beta}^{\left(l\right)}=~&\ld\wt{\phi}_{n_\alpha,m_\beta}^{\left(l+\alpha+\beta+1\right)}-\ld_2\wt{\theta}_{n_\alpha,m_\beta}^{\left(l+\alpha+\beta+1,2\right)}
{{\widetilde{\theta}}_{n_\alpha,m_\beta}^{{\left(l+\alpha+\beta,2\right)}^{-1}}}\wt{\phi}_{n_\alpha,m_\beta}^{\left(l+\alpha+\beta\right)}\\
=~&\partial_x\wt{\phi}_{n_\alpha,m_\beta}^{(l+\alpha+\beta)}-\partial_x\wt{\theta}_{n_\alpha,m_\beta}^{(l+\alpha+\beta,2)}
{{\widetilde{\theta}}_{n_\alpha,m_\beta}^{{\left(l+\alpha+\beta,2\right)}^{-1}}}\wt{\phi}_{n_\alpha,m_\beta}^{\left(l+\alpha+\beta\right)},
\end{split}\\
{\tau[2]}_{n_\alpha,m_\beta}^{(l)}=~&\wt{\theta}_{n_\alpha,m_\beta}^{\left(l+\alpha+\beta,2\right)}\wt\tau_{n_\alpha,m_\beta}^{(l+\alpha+\beta)},\label{dFX:tau2}
\end{align}
where ${\wt\theta}_{n_\alpha,m_\beta}^{\left(l,2\right)}$ is a non-zero fixed solution of the linear equations \eqref{dFX:Lax-tau} in $\wt{~}$  version  (i.e. for ${\wt\phi}_{n_\alpha,m_\beta}^{(l)}$ and ${\wt\tau}_{n_\alpha,m_\beta}^{(l)}$) with $\ld=\ld_2$, for $l=0, 1, \cdots, \cN-1$, as follows
\begin{align}\label{dFX:2DT-4}
\begin{split}
{\wt\theta}_{n_\alpha,m_\beta}^{\left(l,2\right)}=~&\ld_2\theta_{n_\alpha,m_\beta}^{\left(l+\alpha+\beta+1,2\right)}-\ld_1\theta_{n_\alpha,m_\beta}^{\left(l+\alpha+\beta+1,1\right)}
{\theta_{n_\alpha,m_\beta}^{\left(l+\alpha+\beta,1\right)}}^{-1}\theta_{n_\alpha,m_\beta}^{\left(l+\alpha+\beta,2\right)}\\
=~&\partial_x\theta_{n_\alpha,m_\beta}^{(l+\alpha+\beta,2)}-\partial_x\theta_{n_\alpha,m_\beta}^{(l+\alpha+\beta,1)}
{\theta_{n_\alpha,m_\beta}^{\left(l+\alpha+\beta,1\right)}}^{-1}\theta_{n_\alpha,m_\beta}^{\left(l+\alpha+\beta,2\right)},
\end{split}
\end{align}
generated by some fixed solution ${\theta}_{n_\alpha,m_\beta}^{\left(l,2\right)}$ of the linear equations \eqref{dFX:Lax-tau} for given $\tau_{n_\alpha,m_\beta}^{(l)}$ with $\ld=\ld_2$. Thus it can be seen that the \ac{DT}s may be repeated $N-$times and the solutions can be expressed completely in terms of the solutions of the initial equations \eqref{dFX:Lax-tau} without use of the solutions of any intermediate iterations of the procedure.

\begin{proposition}
Let the non-zero function $\theta_{n_\alpha,m_\beta}^{\left(l,i\right)}$ satisfies the linear equations \eqref{dFX:Lax-tau} for given $\tau_{n_\alpha,m_\beta}^{(l)}$ with $\ld=\ld_i$,  for $l=0, 1, \cdots, \cN-1$, $i=1, 2, \dots, N$. Then the so called N-fold Darboux transformations (\ac {DT}s) given as follows

\begin{align}
\begin{split}\label{dFX:NDT-1}
\phi_{n_\alpha,m_\beta}^{(l)} \rightarrow {\wt\phi[N]}_{n_\alpha,m_\beta}^{(l)}&=\frac{G_{n_\alpha,m_\beta}^{(l,N)}}{C_{n_\alpha,m_\beta}^{(l,N)}}\\
&=\frac{W(\theta_{n_\alpha,m_\beta}^{\left(l+N(\alpha+\beta),1\right)}, \dots, \theta_{n_\alpha,m_\beta}^{\left(l+N(\alpha+\beta),N\right)}, \phi_{n_\alpha,m_\beta}^{\left(l+N(\alpha+\beta)\right)} )}{W(\theta_{n_\alpha,m_\beta}^{\left(l+N(\alpha+\beta),1\right)}, \dots, \theta_{n_\alpha,m_\beta}^{\left(l+N(\alpha+\beta),N\right)})}
\end{split}
\\
\begin{split}\label{dFX:NDT-2}
\tau_{n_\alpha,m_\beta}^{(l)} \rightarrow  \wt\tau[N]_{n_\alpha,m_\beta}^{(l)}&=C_{n_\alpha,m_\beta}^{(l,N)}\cdot\tau_{n_\alpha,m_\beta}^{(l+N(\alpha+\beta))}\\
&=W(\theta_{n_\alpha,m_\beta}^{\left(l+N(\alpha+\beta),1\right)}, \dots, \theta_{n_\alpha,m_\beta}^{\left(l+N(\alpha+\beta),N\right)})\cdot\tau_{n_\alpha,m_\beta}^{(l+N(\alpha+\beta))}
\end{split}
\end{align}
for $l=0, 1, \cdots, \cN-1$, leaves the linear system \eqref{dFX:Lax-tau} invariant, where the determinants are defined as follows
\begin{align}
G_{n_\alpha,m_\beta}^{(l,N)}&=\begin{vmatrix}
{\theta}_{n_\alpha,m_\beta}^{\left(l\!+\!N(\alpha\!+\!\beta),1\right)}&\ld_1{\theta}_{n_\alpha,m_\beta}^{\left(l\!+\!N(\alpha\!+\!\beta)\!+\!1,1\right)}&\dots & \ld_1^N\theta_{n_\alpha,m_\beta}^{(l\!+\!N(\alpha\!+\!\beta)\!+\!N,1)}\\
{\theta}_{n_\alpha,m_\beta}^{\left(l\!+\!N(\alpha\!+\!\beta),2\right)}&\ld_2{\theta}_{n_\alpha,m_\beta}^{\left(l\!+\!N(\alpha+\beta)\!+\!1,2\right)}&\dots & \ld_2^N\theta_{n_\alpha,m_\beta}^{(l\!+\!N(\alpha\!+\!\beta)\!+\!N,2)}\\
\vdots&\vdots&\cdots&\vdots\\
{\theta}_{n_\alpha,m_\beta}^{\left(l\!+\!N(\alpha\!+\!\beta),N\right)}&\ld_N{\theta}_{n_\alpha,m_\beta}^{\left(l\!+\!N(\alpha\!+\!\beta)\!+\!1,N\right)}&\dots & \ld_N^N\theta_{n_\alpha,m_\beta}^{(l\!+\!N(\alpha\!+\!\beta)\!+\!N,N)}\\
\phi_{n_\alpha,m_\beta}^{(l\!+\!N(\alpha\!+\!\beta))}&\lambda\phi_{n_\alpha,m_\beta}^{(l\!+\!N(\alpha\!+\!\beta)\!+\!1))}& \dots & \lambda^N\phi_{n_\alpha,m_\beta}^{(l+N(\alpha+\beta)+N))}\\
\end{vmatrix},\label{caso-dete-1}\\
C_{n_\alpha,m_\beta}^{(l,N)}&=\begin{vmatrix}
{\theta}_{n_\alpha,m_\beta}^{\left(l\!+\!N(\alpha\!+\!\beta),1\right)}&\ld_1{\theta}_{n_\alpha,m_\beta}^{\left(l\!+\!N(\alpha\!+\!\beta)\!+\!1,1\right)}&\dots & \ld_1^{N\!-\!1}\theta_{n_\alpha,m_\beta}^{(l\!+\!N(\alpha\!+\!\beta)\!+\!(N\!-\!1),1)}\\
{\theta}_{n_\alpha,m_\beta}^{\left(l\!+\!N(\alpha\!+\!\beta),2\right)}&\ld_2{\theta}_{n_\alpha,m_\beta}^{\left(l\!+\!N(\alpha\!+\!\beta)\!+\!1,2\right)}&\dots & \ld_2^{N\!-\!1}\theta_{n_\alpha,m_\beta}^{(l\!+\!N(\alpha\!+\!\beta)\!+\!(N\!-\!1),2)}\\
\vdots&\vdots&\cdots&\vdots\\
{\theta}_{n_\alpha,m_\beta}^{\left(l\!+\!N(\alpha\!+\!\beta),N\right)}&\ld_N{\theta}_{n_\alpha,m_\beta}^{\left(l\!+\!N(\alpha\!+\!\beta)\!+\!1,N\right)}&\dots & \ld_N^{N\!-\!1}\theta_{n_\alpha,m_\beta}^{(l\!+\!N(\alpha\!+\!\beta)\!+\!(N\!-\!1),N)}\label{caso-dete-2}
\end{vmatrix},
\end{align}
$W(\theta_{n_\alpha,m_\beta}^{\left(l+N(\alpha+\beta),1\right)}, \dots, \theta_{n_\alpha,m_\beta}^{\left(l+N(\alpha+\beta),N\right)}, \phi_{n_\alpha,m_\beta}^{\left(l+N(\alpha+\beta)\right)} )$ and $W(\theta_{n_\alpha,m_\beta}^{\left(l+N(\alpha+\beta),1\right)}, \dots, \theta_{n_\alpha,m_\beta}^{\left(l+N(\alpha+\beta),N\right)})$ are Wronskian determinants\footnote{We define the Wronskian determinant $W$ of $k$ functions $\theta_1, \theta_2, \dots, \theta_k$ by
\begin{align*}
W(\theta_1, \theta_2, \dots, \theta_k)=det A, A_{ij}=\partial_x^{i-1}\theta_j, i,j=1,2,\dots,k.
\end{align*}
 }.
\end{proposition}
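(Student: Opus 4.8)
The plan is to proceed by induction on $N$, the base case $N=1$ being precisely Proposition~\ref{dFX:1DT}, so that the only genuinely new ingredient is a determinant identity that collapses the $N$-fold iteration of the elementary transformation into the closed determinantal formulae \eqref{dFX:NDT-1}--\eqref{dFX:NDT-2}. Before the induction I would record that the two presentations of the transform agree, i.e. that $G_{n_\alpha,m_\beta}^{(l,N)}/C_{n_\alpha,m_\beta}^{(l,N)}$ equals the quotient of $x$-Wronskians in \eqref{dFX:NDT-1} (and likewise for $\tau$). This follows from the continuous equation \eqref{dFX:Lax3-tau}: writing it as $\ld\,\theta_{n_\alpha,m_\beta}^{(\cdot+1,i)}=\partial_x\theta_{n_\alpha,m_\beta}^{(\cdot,i)}-\bigl(\partial_x\ln\tau_{n_\alpha,m_\beta}^{(\cdot+1)}-\partial_x\ln\tau_{n_\alpha,m_\beta}^{(\cdot)}\bigr)\theta_{n_\alpha,m_\beta}^{(\cdot,i)}$ shows that each column carrying one more power of $\ld_i$ differs from $\partial_x$ of the preceding column only by a multiple of that column, and the multiple is the \emph{same} in every row since it is a function of $\tau$ alone. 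Performing these column operations successively (which leave a determinant unchanged) turns the $\ld_i$-shift columns of $G$ and of $C$ into columns of iterated $x$-derivatives, at the cost of one common nonzero $\tau$-factor that cancels in the ratio; thereafter one uses whichever form is convenient, and in particular the statement about all three linear equations in \eqref{dFX:Lax-tau} being preserved is inherited from Proposition~\ref{dFX:1DT} through the composition.

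For the inductive step, assume the statement for $N-1$. Applying the $(N-1)$-fold \ac{DT} to $(\phi,\tau)$ yields a solution $(\wt\phi[N-1],\wt\tau[N-1])$ of \eqref{dFX:Lax-tau} given by the $(N-1)$-fold formulae; applying the \emph{same} $(N-1)$-fold \ac{DT} with $\theta_{n_\alpha,m_\beta}^{(\cdot,N)}$ in the role of $\phi$ and $\ld_N$ in the role of $\ld$ produces $\wt\theta[N-1]_{n_\alpha,m_\beta}^{(\cdot,N)}$ (the $N=2$ prototype being \eqref{dFX:2DT-4}), which is by construction a nonzero solution of the tilde version of \eqref{dFX:Lax-tau} at $\ld=\ld_N$ and is given by the bordered determinant \eqref{caso-dete-1} with its last row populated by $\theta_{n_\alpha,m_\beta}^{(\cdot,N)}$ and $\ld\mapsto\ld_N$, divided by $C_{n_\alpha,m_\beta}^{(\cdot,N-1)}$. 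Proposition~\ref{dFX:1DT}, applied to the tilde system with seed $\wt\theta[N-1]_{n_\alpha,m_\beta}^{(\cdot,N)}$, then guarantees that \eqref{dFX:2DT-3}--\eqref{dFX:tau2} define $(\phi[N],\tau[N])$ solving \eqref{dFX:Lax-tau}. It remains to substitute the $(N-1)$-fold determinantal expressions for $\wt\phi[N-1]$, $\wt\tau[N-1]$ and $\wt\theta[N-1]^{(\cdot,N)}$ into \eqref{dFX:2DT-3}--\eqref{dFX:tau2} and identify the result. The $\tau$-identity is immediate: once the accumulated index shift $l\mapsto l+N(\alpha+\beta)$ is accounted for, the bordered determinant defining $\wt\theta[N-1]^{(\cdot,N)}$ at the shifted base index is literally $C_{n_\alpha,m_\beta}^{(l,N)}$ up to the common $C^{(\cdot,N-1)}$ factor, so $\wt\theta[N-1]_{n_\alpha,m_\beta}^{(l+\alpha+\beta,N)}\,\wt\tau[N-1]_{n_\alpha,m_\beta}^{(l+\alpha+\beta)}$ collapses to $C_{n_\alpha,m_\beta}^{(l,N)}\,\tau_{n_\alpha,m_\beta}^{(l+N(\alpha+\beta))}$, which is \eqref{dFX:NDT-2}. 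The wave-function identity is the substantive point: after clearing denominators it is exactly a Jacobi-type (Desnanot--Jacobi / Sylvester) identity for the $(N{+}1)\times(N{+}1)$ determinant $G_{n_\alpha,m_\beta}^{(l,N)}$, obtained by deleting the two rows carrying $\theta^{(\cdot,N)}$ and $\phi^{(\cdot)}$ and the first and last columns; its four relevant cofactors are (multiples of) $G^{(l,N-1)}$, $G^{(l+\alpha+\beta+1,N-1)}$, the bordered determinant $\widehat G^{(l,N-1)}_N=C^{(l,N)}$ and $\widehat G^{(l+\alpha+\beta+1,N-1)}_N$, so that, suppressing the lattice subscripts,
\begin{align*}
 \ld\,\frac{G^{(l+\alpha+\beta+1,N-1)}}{C^{(l+\alpha+\beta+1,N-1)}}
 -\ld_N\,\frac{G^{(l,N-1)}\,\widehat G^{(l+\alpha+\beta+1,N-1)}_N}{C^{(l+\alpha+\beta+1,N-1)}\,C^{(l,N)}}
 =\frac{G^{(l,N)}}{C^{(l,N)}},
\end{align*}
which is precisely $\phi[N]_{n_\alpha,m_\beta}^{(l)}=G_{n_\alpha,m_\beta}^{(l,N)}/C_{n_\alpha,m_\beta}^{(l,N)}$ in view of \eqref{dFX:2DT-3} and of the equality $\widehat G^{(l,N-1)}_N=C^{(l,N)}$.

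The part requiring the most care --- and the main obstacle --- is the bookkeeping of the component-index shifts: each elementary step advances the superscript counter by $\alpha+\beta$ and sends $l$ to $l+\alpha+\beta$ (and, inside a single step, produces further shifts $l\to l+\alpha+\beta+1$, $l\to l+2\alpha+\beta$, as in the proof of Proposition~\ref{dFX:1DT}), so one must check that after $k$ steps every object is consistently read at superscript $l+k(\alpha+\beta)$, that this is compatible with the periodicity $\theta_{n_\alpha,m_\beta}^{(l+\cN,i)}=\ld_i^{\,\cN}\theta_{n_\alpha,m_\beta}^{(l,i)}$ and $\tau_{n_\alpha,m_\beta}^{(l+\cN)}=\tau_{n_\alpha,m_\beta}^{(l)}$ implicit in \eqref{dFX:Lax-tau}, and that the rows and columns chosen for the Sylvester identity at stage $N$ land on exactly the entries described above. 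A subsidiary point is the nonvanishing of $C_{n_\alpha,m_\beta}^{(l,N)}$, needed for \eqref{dFX:NDT-1}--\eqref{dFX:NDT-2} to be well defined; for generic $\ld_1,\dots,\ld_N$ and generic seed eigenfunctions this Wronskian-type determinant is nonzero, and it may simply be imposed as a nondegeneracy condition on the admissible $\theta_{n_\alpha,m_\beta}^{(l,i)}$. Once the index bookkeeping is organised, the remaining manipulations are the routine Crum-type calculations and present no further difficulty.
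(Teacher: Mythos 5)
Your argument is correct, but it follows a genuinely different route from the paper's. The paper proves the proposition \emph{directly}, without induction: it reduces the invariance of \eqref{dFX:Lax1-tau} under \eqref{dFX:NDT-1}--\eqref{dFX:NDT-2} to the single determinant identity \eqref{Laplace}, then uses the backward-shifted linear equation \eqref{dFX:Lax1-tau} to rewrite the columns of $G_{n_\alpha-1,m_\beta}^{(l+\alpha,N)}$, $G_{n_\alpha-1,m_\beta}^{(l+\alpha-1,N)}$ and $C_{n_\alpha-1,m_\beta}^{(l+\alpha,N)}$ so that every term of \eqref{Laplace} becomes a product of minors drawn from one common array of $2N+1$ columns, and finally recognises the whole left-hand side as the Laplace expansion of a vanishing $(2N+1)\times(2N+1)$ determinant (a Pl\"ucker-type relation). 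You instead induct on $N$, feeding the $(N-1)$-fold transformed data back into Proposition \ref{dFX:1DT} and collapsing the composition via the Desnanot--Jacobi identity applied to $G_{n_\alpha,m_\beta}^{(l,N)}$ with the two rows carrying $\theta^{(\cdot,N)}$ and $\phi$ and the first and last columns deleted; I have checked that the four cofactors are, up to the factors $\prod_i\ld_i$ that you implicitly cancel, exactly $\ld\,G^{(l+\alpha+\beta+1,N-1)}$, $C^{(l,N)}$, $G^{(l+\alpha+\beta,N-1)}$ and $\ld_N\,\widehat G^{(l+\alpha+\beta+1,N-1)}_N$, so the identity you display is the correct one (note the numerator of your second term should read $G^{(l+\alpha+\beta,N-1)}$ rather than $G^{(l,N-1)}$, and correspondingly $\widehat G^{(l+\alpha+\beta,N-1)}_N=C^{(l,N)}$; this is precisely the index bookkeeping you flag). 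Your route buys a transparent recursive structure and gets all three equations of \eqref{dFX:Lax-tau} preserved at once from the one-step result, whereas the paper's direct Laplace-expansion argument avoids the intermediate iterated objects altogether and therefore only needs nonvanishing of the final $C^{(l,N)}$. For your induction to be complete you should state explicitly that the intermediate quantities $C^{(l,k)}$ and $\wt\theta[k]^{(l,N)}$ for $k<N$ are required to be nonzero; since the final formulae are algebraic in the seed data, this restriction can then be removed a posteriori by continuity wherever $C^{(l,N)}\neq 0$. Your preliminary reconciliation of the $\ld$-power columns with the $x$-Wronskian columns via \eqref{dFX:Lax3-tau} is the same observation the paper makes in its closing sentence, except that the column operations are pure additions of multiples of earlier columns, so no overall $\tau$-factor arises at all.
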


\begin{proof}
To verify that the first equation in the linear system \eqref{dFX:Lax-tau} is invariant when replace $\phi_{n_\alpha,m_\beta}^{(l)}$ by ${\wt\phi[N]}_{n_\alpha,m_\beta}^{(l)}$, we equivalently prove that
\begin{align*}
G_{n_\alpha+1,m_\beta}^{(l,N)}C_{n_\alpha,m_\beta}^{(l+\alpha+1,N)}=a_1\frac{\tau_{n_\alpha,m_\beta}^{(l+N(\alpha+\beta)+\alpha)}\tau_{n_\alpha+1,m_\beta}^{(l+N(\alpha+\beta)+1)}}
 {\tau_{n_\alpha,m_\beta}^{(l+N(\alpha+\beta)+\alpha+1)}\tau_{n_\alpha+1,m_\beta}^{(l+N(\alpha+\beta))}}G_{n_\alpha,m_\beta}^{(l+\alpha,N)}C_{n_\alpha+1,m_\beta}^{(l+1,N)}+\lambda G_{n_\alpha,m_\beta}^{(l+\alpha+1,N)}C_{n_\alpha+1,m_\beta}^{(l,N)},
\end{align*}
namely,
\begin{align}\label{Laplace}
G_{n_\alpha,m_\beta}^{(l-1,N)}C_{n_\alpha-1,m_\beta}^{(l+\alpha,N)}-a_1\frac{\tau_{n_\alpha-1,m_\beta}^{(l+N(\alpha+\beta)+\alpha-1)}\tau_{n_\alpha,m_\beta}^{(l+N(\alpha+\beta))}}
 {\tau_{n_\alpha-1,m_\beta}^{(l+N(\alpha+\beta)+\alpha)}\tau_{n_\alpha,m_\beta}^{(l+N(\alpha+\beta)-1)}}G_{n_\alpha-1,m_\beta}^{(l+\alpha-1,N)}C_{n_\alpha,m_\beta}^{(l,N)}-\lambda G_{n_\alpha-1,m_\beta}^{(l+\alpha,N)}C_{n_\alpha,m_\beta}^{(l-1,N)}=0,
\end{align}
by moving the component index $l\rightarrow l-1$ and shifting backward the discrete variable $n_\alpha\rightarrow n_\alpha-1$.

To prove \eqref{Laplace} is true, the way is using the Laplace theorem of a $(2N+1)\times(2N+1)$ determinant. From the definition \eqref{caso-dete-1} and \eqref{caso-dete-2}, we can easily write out $G_{n_\alpha,m_\beta}^{(l-1,N)}$ and $C_{n_\alpha,m_\beta}^{(l-1,N)}$, namely
\begin{align}
G_{n_\alpha,m_\beta}^{(l-1,N)}&=\begin{vmatrix}
{\theta}_{n_\alpha,m_\beta}^{\left(l\!+\!N(\alpha\!+\!\beta)-1,1\right)}&\ld_1{\theta}_{n_\alpha,m_\beta}^{\left(l\!+\!N(\alpha\!+\!\beta),1\right)}&\dots & \ld_1^N\theta_{n_\alpha,m_\beta}^{(l\!+\!N(\alpha\!+\!\beta)\!+\!N-1,1)}\\
{\theta}_{n_\alpha,m_\beta}^{\left(l\!+\!N(\alpha\!+\!\beta)-1,2\right)}&\ld_2{\theta}_{n_\alpha,m_\beta}^{\left(l\!+\!N(\alpha+\beta),2\right)}&\dots & \ld_2^N\theta_{n_\alpha,m_\beta}^{(l\!+\!N(\alpha\!+\!\beta)\!+\!N-1,2)}\\
\vdots&\vdots&\cdots&\vdots\\
{\theta}_{n_\alpha,m_\beta}^{\left(l\!+\!N(\alpha\!+\!\beta)-1,N\right)}&\ld_N{\theta}_{n_\alpha,m_\beta}^{\left(l\!+\!N(\alpha\!+\!\beta),N\right)}&\dots & \ld_N^N\theta_{n_\alpha,m_\beta}^{(l\!+\!N(\alpha\!+\!\beta)\!+\!N-1,N)}\\
\phi_{n_\alpha,m_\beta}^{(l\!+\!N(\alpha\!+\!\beta)-1)}&\lambda\phi_{n_\alpha,m_\beta}^{(l\!+\!N(\alpha\!+\!\beta)))}& \dots & \lambda^N\phi_{n_\alpha,m_\beta}^{(l+N(\alpha+\beta)+N-1)}\\
\end{vmatrix},\label{caso-dete-3}\\
 \mathop{\Pi}_{i=1}^{N}\ld_iC_{n_\alpha,m_\beta}^{(l-1,N)}&=\begin{vmatrix}
\ld_1^{-1}{\theta}_{n_\alpha,m_\beta}^{\left(l\!+\!N(\alpha\!+\!\beta)-1,1\right)}&{\theta}_{n_\alpha,m_\beta}^{\left(l\!+\!N(\alpha\!+\!\beta),1\right)}&\dots & \ld_1^{N\!-\!2}\theta_{n_\alpha,m_\beta}^{(l\!+\!N(\alpha\!+\!\beta)\!+\!(N\!-\!2),1)}\\
\ld_2^{-1}{\theta}_{n_\alpha,m_\beta}^{\left(l\!+\!N(\alpha\!+\!\beta)-1,2\right)}&{\theta}_{n_\alpha,m_\beta}^{\left(l\!+\!N(\alpha\!+\!\beta,2\right)}&\dots & \ld_2^{N\!-\!2}\theta_{n_\alpha,m_\beta}^{(l\!+\!N(\alpha\!+\!\beta)\!+\!(N\!-\!2),2)}\\
\vdots&\vdots&\cdots&\vdots\\
\ld_N^{-1}{\theta}_{n_\alpha,m_\beta}^{\left(l\!+\!N(\alpha\!+\!\beta)-1,N\right)}&{\theta}_{n_\alpha,m_\beta}^{\left(l\!+\!N(\alpha\!+\!\beta),N\right)}&\dots & \ld_N^{N\!-\!2}\theta_{n_\alpha,m_\beta}^{(l\!+\!N(\alpha\!+\!\beta)\!+\!(N\!-\!2),N)}\label{caso-dete-4}
\end{vmatrix}.
\end{align}

As for the determinant expression of $C_{n_\alpha-1,m_\beta}^{(l+\alpha,N)}$,
$a_1\frac{\tau_{n_\alpha-1,m_\beta}^{(l+N(\alpha+\beta)+\alpha-1)}\tau_{n_\alpha,m_\beta}^{(l+N(\alpha+\beta))}}{\tau_{n_\alpha-1,m_\beta}^{(l+N(\alpha+\beta)+\alpha)}\tau_{n_\alpha,m_\beta}^{(l+N(\alpha+\beta)-1)}}G_{n_\alpha-1,m_\beta}^{(l+\alpha-1,N)}$
and $G_{n_\alpha-1,m_\beta}^{(l+\alpha,N)}$, we need to use the linear equation \eqref{dFX:Lax1-tau} to express them in another way respectively.

Taking into account that the linear equation \eqref{dFX:Lax1-tau} when $l \rightarrow l+N(\alpha+\beta)+s-2$, $s= N, N-1, \dots ,2$, and  shifting $n_\alpha\rightarrow n_\alpha-1$, we have
\begin{align*}
\phi_{n_\alpha,m_\beta}^{(l+N(\alpha+\beta)+s-2)}&=a_1\frac{\tau_{n_\alpha-1,m_\beta}^{(l+N(\alpha+\beta)+s-2+\alpha)}\tau_{n_\alpha,m_\beta}^{(l+N(\alpha+\beta)+s-1)}}
 {\tau_{n_\alpha-1,m_\beta}^{(l+N(\alpha+\beta)+s-1+\alpha)}\tau_{n_\alpha,m_\beta}^{(l+N(\alpha+\beta)+s-2)}}
 \phi_{n_\alpha-1,m_\beta}^{(l+N(\alpha+\beta)+s-2+\alpha)}+\ld\phi_{n_\alpha-1,m_\beta}^{(l+N(\alpha+\beta)+s-1+\alpha)},
\end{align*}
namely
\begin{align}\label{Laplace-1}
\ld^{s-1}\phi_{n_\alpha-1,m_\beta}^{(l+N(\alpha+\beta)+s-1+\alpha)}=\ld^{s-2}\bigg[\phi_{n_\alpha,m_\beta}^{(l+N(\alpha+\beta)+s-2)}-a_1\frac{\tau_{n_\alpha-1,m_\beta}^{(l+N(\alpha+\beta)+s-2+\alpha)}\tau_{n_\alpha,m_\beta}^{(l+N(\alpha+\beta)+s-1)}}
 {\tau_{n_\alpha-1,m_\beta}^{(l+N(\alpha+\beta)+s-1+\alpha)}\tau_{n_\alpha,m_\beta}^{(l+N(\alpha+\beta)+s-2)}}
 \phi_{n_\alpha-1,m_\beta}^{(l+N(\alpha+\beta)+s-2+\alpha)}\bigg].
\end{align}
Similarly, when $\ld=\ld_i$, $i=1, 2, \dots, N$, we have
\begin{align}\label{Laplace-2}
\ld_i^{s-1}\theta_{n_\alpha-1,m_\beta}^{(l+N(\alpha+\beta)+s-1+\alpha,i)}=\ld_i^{s-2}\bigg[\theta_{n_\alpha,m_\beta}^{(l+N(\alpha+\beta)+s-2,i)}
-a_1\frac{\tau_{n_\alpha-1,m_\beta}^{(l+N(\alpha+\beta)+s-2+\alpha)}\tau_{n_\alpha,m_\beta}^{(l+N(\alpha+\beta)+s-1)}}
 {\tau_{n_\alpha-1,m_\beta}^{(l+N(\alpha+\beta)+s-1+\alpha)}\tau_{n_\alpha,m_\beta}^{(l+N(\alpha+\beta)+s-2)}}
 \theta_{n_\alpha-1,m_\beta}^{(l+N(\alpha+\beta)+s-2+\alpha,i)}\bigg].
\end{align}

Substituting \eqref{Laplace-2} for $s=N, N-1, \dots, 2$ respectively into the columns, in reverse order, from the $N{th}$ to the second, we obtain
\begin{align}\label{Laplace-7}
C_{n_\alpha-1,m_\beta}^{(l+\alpha,N)}&=\begin{vmatrix}
{\theta}_{n_\alpha-1,m_\beta}^{\left(l\!+\!N(\alpha\!+\!\beta)+\alpha,1\right)}&{\theta}_{n_\alpha,m_\beta}^{\left(l\!+\!N(\alpha\!+\!\beta),1\right)}&\dots & \ld_1^{N\!-\!2}\theta_{n_\alpha,m_\beta}^{(l\!+\!N(\alpha\!+\!\beta)\!+\!(N\!-\!2),1)}\\
{\theta}_{n_\alpha-1,m_\beta}^{\left(l\!+\!N(\alpha\!+\!\beta)+\alpha,2\right)}&{\theta}_{n_\alpha,m_\beta}^{\left(l\!+\!N(\alpha\!+\!\beta),2\right)}&\dots & \ld_2^{N\!-\!2}\theta_{n_\alpha,m_\beta}^{(l\!+\!N(\alpha\!+\!\beta)\!+\!(N\!-\!2),2)}\\
\vdots&\vdots&\cdots&\vdots\\
{\theta}_{n_\alpha-1,m_\beta}^{\left(l\!+\!N(\alpha\!+\!\beta)+\alpha,N\right)}&{\theta}_{n_\alpha,m_\beta}^{\left(l\!+\!N(\alpha\!+\!\beta),N\right)}&\dots & \ld_N^{N\!-\!2}\theta_{n_\alpha,m_\beta}^{(l\!+\!N(\alpha\!+\!\beta)\!+\!(N\!-\!2),N)}
\end{vmatrix}.
\end{align}

Taking into account that the linear equation \eqref{dFX:Lax1-tau} when $l \rightarrow l+N(\alpha+\beta)+s-3$, $s=N+1, N,\dots ,2$, and  shifting $n_\alpha\rightarrow n_\alpha-1$, we have
\begin{align*}
\phi_{n_\alpha,m_\beta}^{(l+N(\alpha+\beta)+s-3)}&=a_1\frac{\tau_{n_\alpha-1,m_\beta}^{(l+N(\alpha+\beta)+s-3+\alpha)}\tau_{n_\alpha,m_\beta}^{(l+N(\alpha+\beta)+s-2)}}
 {\tau_{n_\alpha-1,m_\beta}^{(l+N(\alpha+\beta)+s-2+\alpha)}\tau_{n_\alpha,m_\beta}^{(l+N(\alpha+\beta)+s-3)}}
 \phi_{n_\alpha-1,m_\beta}^{(l+N(\alpha+\beta)+s-3+\alpha)}+\ld\phi_{n_\alpha-1,m_\beta}^{(l+N(\alpha+\beta)+s-2+\alpha)},
\end{align*}
namely
\begin{align}\label{Laplace-3}
\ld^{s-1}\phi_{n_\alpha-1,m_\beta}^{(l+N(\alpha+\beta)+s-2+\alpha)}=\ld^{s-2}\bigg[\phi_{n_\alpha,m_\beta}^{(l+N(\alpha+\beta)+s-3)}-a_1\frac{\tau_{n_\alpha-1,m_\beta}^{(l+N(\alpha+\beta)+s-3+\alpha)}\tau_{n_\alpha,m_\beta}^{(l+N(\alpha+\beta)+s-2)}}
 {\tau_{n_\alpha-1,m_\beta}^{(l+N(\alpha+\beta)+s-2+\alpha)}\tau_{n_\alpha,m_\beta}^{(l+N(\alpha+\beta)+s-3)}}
 \phi_{n_\alpha-1,m_\beta}^{(l+N(\alpha+\beta)+s-3+\alpha)}\bigg].
\end{align}
Similarly, when $\ld=\ld_i$, $i=1, 2, \dots, N$, we have
\begin{align}\label{Laplace-4}
\ld_i^{s-1}\theta_{n_\alpha-1,m_\beta}^{(l+N(\alpha+\beta)+s-2+\alpha,i)}=\ld_i^{s-2}\bigg[\theta_{n_\alpha,m_\beta}^{(l+N(\alpha+\beta)+s-3,i)}
-a_1\frac{\tau_{n_\alpha-1,m_\beta}^{(l+N(\alpha+\beta)+s-3+\alpha)}\tau_{n_\alpha,m_\beta}^{(l+N(\alpha+\beta)+s-2)}}
 {\tau_{n_\alpha-1,m_\beta}^{(l+N(\alpha+\beta)+s-2+\alpha)}\tau_{n_\alpha,m_\beta}^{(l+N(\alpha+\beta)+s-3)}}
 \theta_{n_\alpha-1,m_\beta}^{(l+N(\alpha+\beta)+s-3+\alpha,i)}\bigg].
\end{align}
Substituting \eqref{Laplace-3} and \eqref{Laplace-4} for $s=N+1, N, \dots, 2$ respectively into the columns, in reverse order, from the $(N+1){th}$ to the second, we obtain
\begin{align*}
G_{n_\alpha-1,m_\beta}^{(l+\alpha-1,N)}&=\begin{vmatrix}
{\theta}_{n_\alpha-1,m_\beta}^{\left(l\!+\!N(\alpha\!+\!\beta)+\alpha-1,1\right)}&{\theta}_{n_\alpha,m_\beta}^{\left(l\!+\!N(\alpha\!+\!\beta)-1,1\right)}&\dots & \ld_1^{N\!-\!1}\theta_{n_\alpha,m_\beta}^{(l\!+\!N(\alpha\!+\!\beta)\!+\!(N\!-\!2),1))}\\
{\theta}_{n_\alpha-1,m_\beta}^{\left(l\!+\!N(\alpha\!+\!\beta)+\alpha-1,2\right)}&{\theta}_{n_\alpha,m_\beta}^{\left(l\!+\!N(\alpha\!+\!\beta)-1,2\right)}&\dots & \ld_2^{N\!-\!1}\theta_{n_\alpha,m_\beta}^{(l\!+\!N(\alpha\!+\!\beta)\!+\!(N\!-\!2),2))}\\
\vdots&\vdots&\cdots&\vdots\\
{\theta}_{n_\alpha-1,m_\beta}^{\left(l\!+\!N(\alpha\!+\!\beta)+\alpha-1,N\right)}&{\theta}_{n_\alpha,m_\beta}^{\left(l\!+\!N(\alpha\!+\!\beta)-1,N\right)}&\dots & \ld_N^{N\!-\!1}\theta_{n_\alpha,m_\beta}^{(l\!+\!N(\alpha\!+\!\beta)\!+\!(N\!-\!2),N))}\\
{\phi}_{n_\alpha-1,m_\beta}^{\left(l\!+\!N(\alpha\!+\!\beta)+\alpha-1\right)}&{\phi}_{n_\alpha,m_\beta}^{\left(l\!+\!N(\alpha\!+\!\beta)-1\right)}&\dots & \ld^{N\!-\!1}\phi_{n_\alpha,m_\beta}^{(l\!+\!N(\alpha\!+\!\beta)\!+\!(N\!-\!2)))}
\end{vmatrix}.
\end{align*}
For the first column in the above determinant, with the help of \eqref{Laplace-3} and \eqref{Laplace-4} for $s=2$, i.e.,
\begin{align*}
a_1\frac{\tau_{n_\alpha-1,m_\beta}^{(l+N(\alpha+\beta)+\alpha-1)}\tau_{n_\alpha,m_\beta}^{(l+N(\alpha+\beta))}}
 {\tau_{n_\alpha-1,m_\beta}^{(l+N(\alpha+\beta)+\alpha)}\tau_{n_\alpha,m_\beta}^{(l+N(\alpha+\beta)-1)}}
 \phi_{n_\alpha-1,m_\beta}^{(l+N(\alpha+\beta)+\alpha-1)}=\phi_{n_\alpha,m_\beta}^{(l+N(\alpha+\beta)-1)}
-\ld\phi_{n_\alpha-1,m_\beta}^{(l+N(\alpha+\beta)+\alpha)},
\end{align*}
and
\begin{align*}
a_1\frac{\tau_{n_\alpha-1,m_\beta}^{(l+N(\alpha+\beta)+\alpha-1)}\tau_{n_\alpha,m_\beta}^{(l+N(\alpha+\beta))}}
 {\tau_{n_\alpha-1,m_\beta}^{(l+N(\alpha+\beta)+\alpha)}\tau_{n_\alpha,m_\beta}^{(l+N(\alpha+\beta)-1)}}
 \theta_{n_\alpha-1,m_\beta}^{(l+N(\alpha+\beta)+\alpha-1,i)}=\theta_{n_\alpha,m_\beta}^{(l+N(\alpha+\beta)-1,i)}
-\ld_i\theta_{n_\alpha-1,m_\beta}^{(l+N(\alpha+\beta)+\alpha,i)}, i=1, 2, \dots, N,
\end{align*}
we get
\begin{align}\label{Laplace-5}
a_1\frac{\tau_{n_\alpha-1,m_\beta}^{(l+N(\alpha+\beta)+\alpha-1)}\tau_{n_\alpha,m_\beta}^{(l+N(\alpha+\beta))}}
 {\tau_{n_\alpha-1,m_\beta}^{(l+N(\alpha+\beta)+\alpha)}\tau_{n_\alpha,m_\beta}^{(l+N(\alpha+\beta)-1)}}G_{n_\alpha-1,m_\beta}^{(l+\alpha-1,N)}&=-\begin{vmatrix}
\ld_1{\theta}_{n_\alpha-1,m_\beta}^{\left(l\!+\!N(\alpha\!+\!\beta)+\alpha,1\right)}&{\theta}_{n_\alpha,m_\beta}^{\left(l\!+\!N(\alpha\!+\!\beta)-1,1\right)}&\dots & \ld_1^{N\!-\!1}\theta_{n_\alpha,m_\beta}^{(l\!+\!N(\alpha\!+\!\beta)\!+\!(N\!-\!2),1))}\\
\ld_2{\theta}_{n_\alpha-1,m_\beta}^{\left(l\!+\!N(\alpha\!+\!\beta)+\alpha,2\right)}&{\theta}_{n_\alpha,m_\beta}^{\left(l\!+\!N(\alpha\!+\!\beta)-1,2\right)}&\dots & \ld_2^{N\!-\!1}\theta_{n_\alpha,m_\beta}^{(l\!+\!N(\alpha\!+\!\beta)\!+\!(N\!-\!2),2))}\\
\vdots&\vdots&\cdots&\vdots\\
\ld_N{\theta}_{n_\alpha-1,m_\beta}^{\left(l\!+\!N(\alpha\!+\!\beta)+\alpha,N\right)}&{\theta}_{n_\alpha,m_\beta}^{\left(l\!+\!N(\alpha\!+\!\beta)-1,N\right)}&\dots & \ld_N^{N\!-\!1}\theta_{n_\alpha,m_\beta}^{(l\!+\!N(\alpha\!+\!\beta)\!+\!(N\!-\!2),N))}\\
\ld{\phi}_{n_\alpha-1,m_\beta}^{\left(l\!+\!N(\alpha\!+\!\beta)+\alpha\right)}&{\phi}_{n_\alpha,m_\beta}^{\left(l\!+\!N(\alpha\!+\!\beta)-1\right)}&\dots & \ld^{N\!-\!1}\phi_{n_\alpha,m_\beta}^{(l\!+\!N(\alpha\!+\!\beta)\!+\!(N\!-\!2)))}
\end{vmatrix}.
\end{align}
Similarly, in the light of \eqref{Laplace-1} and \eqref{Laplace-2} for $s=N+1, N, \dots, 2$, we obtain
\begin{align}\label{Laplace-6}
\lambda G_{n_\alpha-1,m_\beta}^{(l+\alpha,N)}= \mathop{\Pi}_{i=1}^{N}\ld_i^{-1}\begin{vmatrix}
\ld_1{\theta}_{n_\alpha-1,m_\beta}^{\left(l\!+\!N(\alpha\!+\!\beta)+\alpha,1\right)}&\ld_1{\theta}_{n_\alpha,m_\beta}^{\left(l\!+\!N(\alpha\!+\!\beta),1\right)}&\dots & \ld_1^{N}\theta_{n_\alpha,m_\beta}^{(l\!+\!N(\alpha\!+\!\beta)\!+\!(N\!-\!1),1))}\\
\ld_2{\theta}_{n_\alpha-1,m_\beta}^{\left(l\!+\!N(\alpha\!+\!\beta)+\alpha,2\right)}&\ld_2{\theta}_{n_\alpha,m_\beta}^{\left(l\!+\!N(\alpha\!+\!\beta),2\right)}&\dots & \ld_2^{N}\theta_{n_\alpha,m_\beta}^{(l\!+\!N(\alpha\!+\!\beta)\!+\!(N\!-\!1),2))}\\
\vdots&\vdots&\cdots&\vdots\\
\ld_N{\theta}_{n_\alpha-1,m_\beta}^{\left(l\!+\!N(\alpha\!+\!\beta)+\alpha,N\right)}&\ld_N{\theta}_{n_\alpha,m_\beta}^{\left(l\!+\!N(\alpha\!+\!\beta),N\right)}&\dots & \ld_N^{N}\theta_{n_\alpha,m_\beta}^{(l\!+\!N(\alpha\!+\!\beta)\!+\!(N\!-\!1),N))}\\
\ld{\phi}_{n_\alpha-1,m_\beta}^{\left(l\!+\!N(\alpha\!+\!\beta)+\alpha\right)}&\ld{\phi}_{n_\alpha,m_\beta}^{\left(l\!+\!N(\alpha\!+\!\beta)\right)}&\dots & \ld^{N}\phi_{n_\alpha,m_\beta}^{(l\!+\!N(\alpha\!+\!\beta)\!+\!(N\!-\!1)))}
\end{vmatrix}.
\end{align}

Substituting those above determinants into the left-hand side of the equation \eqref{Laplace} gives the $(2N+1)\times(2N+1)$ vanishing determinant.

In a similar way, we can prove that the second equation \eqref{dFX:Lax2-tau} and \eqref{dFX:Lax3-tau}  are true. Otherwise, by using the continuous part in the linear system, we can see that the two determinants in the DT are equal.
\end{proof}

\section{Exact solutions}\label{S:Sol}

In this section, we present explicit examples of the solutions that may be obtained by means of the Darboux transformations.
We choose the seed solution of the $\mathbb{Z}_\cN$ graded discrete integrable models \eqref{dFX} as $\tau_{n_\alpha,m_\beta}^{(l)}=1$ (by \eqref{cFX:BLT} $u_{n_\alpha, m_\beta}^{(l)}=0$ and $v_{n_\alpha, m_\beta}^{(l)}=1$) for $l=0, 1, \dots, \cN-1$. With this choice, the $\mathbb{Z}_\cN$ graded discrete Lax pairs \eqref{dFX:Lax-tau} read as follows
\begin{align*}
&\phi_{n_\alpha+1,m_\beta}^{(l)}=a_1\phi_{n_\alpha,m_\beta}^{(l+\alpha)}+\ld\phi_{n_\alpha,m_\beta}^{(l+\alpha+1)}, \\
&\phi_{n_\alpha,m_\beta+1}^{(l)}=a_2\phi_{n_\alpha,m_\beta}^{(l+\beta)}+\ld\phi_{n_\alpha,m_\beta}^{(l+\beta+1)}, \\
&\partial_x\phi_{n_\alpha,m_\beta}^{(l)}=\ld\phi_{n_\alpha,m_\beta}^{(l+1)}.
\end{align*}
Its $N$ linearly independent nonzero solutions, depending on parameter $\ld_i$ are found to be
\begin{align}
\theta^{(l,i)}_{n_\alpha,m_\beta}=\sum_{r=0}^{\cN-1}e^{\oa_\cN^{r} \ld_i x}{\oa_\cN^r}^{(l+\alpha n_\alpha+\beta m_\beta)} \left(a_1+\oa_\cN^r \ld_i\right)^{n_{\alpha}}\left(a_2+\oa_\cN^r \ld_i\right)^{m_\beta}, \quad i=1,2,\dots,N,\label{theta-i}
\end{align}
where the complex number $\oa_\cN^{r}=e^{\frac{2 r\pi \mathfrak{i}}{\cN}}$ , $l=0,1,\dots,\cN-1$. So the N-soliton solution of the $\mathbb{Z}_\cN$ graded discrete integrable models can be obtained by the N-fold \ac{DT}s  in \eqref{dFX:NDT-2} with elements $\theta^{(l,i)}_{n_\alpha,m_\beta}$ given in \eqref{theta-i}.

For instance, when $N=1$, $\cN=2$, and $(\alpha, \beta)=(0,1)$, we have the 1-soliton solution
\begin{align*}
v^{(0)}_{n_0,m_1}=\frac{e^{\lambda_1x}+e^{-\lambda_1x}\rho_{n_0,m_1}^{(1)}}{e^{\ld_1x}+e^{-\lambda_1x}\rho_{n_0,m_1}^{(0)}}, \quad v^{(1)}_{n_0,m_1}=\frac{1}{v^{(0)}_{n_0,m_1}}
\end{align*}
of the discrete-time \ac{2DTL} of $A_1^{(1)}$-type; where $\rho_{n_0,m_1}^{(l)}=\left(\frac{a_1-\ld_1}{a_1+\ld_1}\right)^{n_0}\!\!\left(\frac{a_2-\ld_1}{a_2+\ld_1}\right)^{m_1}\!\!(-1)^l$ , $l=0, 1$, is the discrete plane wave factor. Moreover, when $N=1$, $\cN=3$ and $(\alpha, \beta)=(0,2)$, we have the 1-soliton solution
\begin{align*}
&v^{(0)}_{n_0,m_2}=\frac{e^{\lambda_1x}+e^{\oa_3^{1}\lambda_1x}\rho_{n_0,m_2}^{(1,\oa_3^{1})}+e^{\oa_3^{2}\lambda_1x}\rho_{n_0,m_2}^{(1,\oa_3^{2})}}{e^{\lambda_1x}+e^{\oa_3^{1}\lambda_1x}\rho_{n_0,m_2}^{(0,\oa_3^{1})}+e^{\oa_3^{2}\lambda_1x}\rho_{n_0,m_2}^{(0,\oa_3^{2})}},\\
& v^{(1)}_{n_0,m_2}=\frac{e^{\lambda_1x}+e^{\oa_3^{1}\lambda_1x}\rho_{n_0,m_2}^{(2,\oa_3^{1})}+e^{\oa_3^{2}\lambda_1x}\rho_{n_0,m_2}^{(2,\oa_3^{2})}}{e^{\lambda_1x}+e^{\oa_3^{1}\lambda_1x}\rho_{n_0,m_2}^{(1,\oa_3^{1})}+e^{\oa_3^{2}\lambda_1x}\rho_{n_0,m_2}^{(1,\oa_3^{2})}},\\
 &v^{(2)}_{n_0,m_2}=\frac{1}{v^{(0)}_{n_0,m_2}v^{(1)}_{n_0,m_2}}
 \end{align*}
of the discrete-time \ac{2DTL} of $A_2^{(1)}$-type; where $\rho_{n_0,m_2}^{(l,\oa_3^{r})}=\left(\frac{a_1-\ld_1}{a_1+\ld_1}\right)^{n_0}\!\!\left(\frac{a_2-\ld_1}{a_2+\ld_1}\right)^{m_2}\!\!(\oa_3^{r})^l$ , $\oa_3^{r}=e^{\frac{2 r\pi \mathfrak{i}}{3}}$, $l=0, 1, 2$, $r=0,1,2$ is the discrete plane wave factor.

\section{Conclusions}\label{S:Concl}

In this paper, we considered the Darboux transformations of the $\mathbb{Z}_\cN$ graded discrete integrable systems (FX models in coprime case), in order to present their multi-soliton solution as explicit expressions.
To our knowledge  \cite{SNZ14} and motivated by the universality of the discrete KP-type equations, we were naturally led to considering reductions of the discrete KP-type equations in order to recover multi-component discrete systems ( with the idea of increasing by one the number of components in order to decrease by one the dimension).
By periodic reductions, we in our scheme obtained the discrete Gel'fand-Dikii hierarchy \eqref{dGD} associated with their Lax pairs \eqref{dGD:Lax} and \eqref{dGD:Adjoint} from the discrete KP-type equations.
Then by introducing a variable transformations \eqref{dFX:CoV}, we recovered not only the \ac{FX} discrete systems in coprime case, the corresponding $\mathbb{Z}_\cN$-graded Lax pairs
and the novel adjoint discrete Lax pairs, but also their compatible differential-difference analogues and the bilinear formalism of these equations.

Another remark is that we gave three forms of the discrete Gel'fand-Dikii hierarchy, which are unmodifed, modified and bilinear forms, and the  Miura maps between them.
Different to the continuous case, one cannot construct solutions of a nonlinear equation directly from that of another nonlinear equation related by the discrete Miura maps.
So we introduce extra variables, which can be either continuous or discrete, and overcome the disadvantage (namely nonlocality) bringing by the discrete Miura maps, see \eqref{cFX:BLT}.

We only show the multi-soliton solutions of the $\mathbb{Z}_\cN$ graded discrete integrable systems in this paper.
It is quite certain that other classes of solutions can also be obtained under the framework of Darboux transformations.
On the construction of Darboux transformations, here they were obtained through reduction and consistency with the semi-discrete parts.
In fact, it is interesting to consider how to construct the Darboux transformation by factorisation of difference operators like in the continuous case.
The exact solutions of the non-coprime case of the FX models are still unclear.
These above questions are our future work.

\section*{Acknowledgments}

This project was supported by the National Natural Science Foundation of China (NSFC) under grant 11501510.
Part of the author's work was carried out during her PhD study at the University of Glasgow,
and she was very grateful to her supervisor Jonathan Nimmo for his supervision and many stimulating discussions on the theory of Darboux transformations.

\appendix

\section{Examples for $\cN=2$ and $\cN=3$}\label{S:Exampl}

\subsection{$\cN=2$}

\subsubsection{$(\alpha,\beta)=(0,0)$}

\begin{itemize}

\item{Additive potential}

\begin{itemize}

\item{First integrals}
\begin{align*}
 &\left(a_1+u_{n_0,m_0}^{(1)}-u_{n_0+1,m_0}^{(0)}\right)\left(a_1+u_{n_0,m_0}^{(0)}-u_{n_0+1,m_0}^{(1)}\right)=a_1^2, \\
 &\left(a_2+u_{n_0,m_0}^{(1)}-u_{n_0,m_0+1}^{(0)}\right)\left(a_2+u_{n_0,m_0}^{(0)}-u_{n_0,m_0+1}^{(1)}\right)=a_2^2;
\end{align*}
\item{Lax matrices}
\begin{align*}
 &\bL_{n_0,m_0}=
 \begin{pmatrix}
  a_1+u_{n_0,m_0}^{(1)}-u_{n_0+1,m_0}^{(0)} & \lambda \\
  \lambda & a_1+u_{n_0,m_0}^{(0)}-u_{n_0+1,m_0}^{(1)}
 \end{pmatrix}, \\
 &\bM_{n_0,m_0}=
 \begin{pmatrix}
  a_2+u_{n_0,m_0}^{(1)}-u_{n_0,m_0+1}^{(0)} & \lambda \\
  \lambda & a_2+u_{n_0,m_0}^{(0)}-u_{n_0,m_0+1}^{(1)}
 \end{pmatrix};
\end{align*}
\item{Two-component system}
\begin{align*}
 &\frac{a_1+u_{n_0,m_0+1}^{(1)}-u_{n_0+1,m_0+1}^{(0)}}{a_1+u_{n_0,m_0}^{(1)}-u_{n_0+1,m_0}^{(0)}}
 =\frac{a_2+u_{n_0+1,m_0}^{(1)}-u_{n_0+1,m_0+1}^{(0)}}{a_2+u_{n_0,m_0}^{(1)}-u_{n_0,m_0+1}^{(0)}}, \\
 &\frac{a_1+u_{n_0,m_0+1}^{(0)}-u_{n_0+1,m_0+1}^{(1)}}{a_1+u_{n_0,m_0}^{(0)}-u_{n_0+1,m_0}^{(1)}}
 =\frac{a_2+u_{n_0+1,m_0}^{(0)}-u_{n_0+1,m_0+1}^{(1)}}{a_2+u_{n_0,m_0}^{(0)}-u_{n_0,m_0+1}^{(1)}};
\end{align*}
\item{Scalar equation (discrete (unmodified) \ac{KdV} equation, cf. \eqref{dKdV:u})}
\begin{align*}
 \left(a_1+a_2+u_{n_0,m_0}^{(0)}-u_{n_0+1,m_0+1}^{(0)}\right)
 \left(a_1-a_2+u_{n_0,m_0+1}^{(0)}-u_{n_0+1,m_0}^{(0)}\right)=a_1^2-a_2^2.
\end{align*}

\end{itemize}

\item{Quotient potential}

\begin{itemize}

\item{First integrals}
\begin{align*}
 v_{n_0,m_0}^{(0)}v_{n_0,m_0}^{(1)}=1;
\end{align*}
\item{Lax matrices}
\begin{align*}
 \bL_{n_0,m_0}=
 \begin{pmatrix}
  a_1\frac{v_{n_0+1,m_0}^{(0)}}{v_{n_0,m_0}^{(0)}} & \lambda \\
  \lambda & a_1\frac{v_{n_0+1,m_0}^{(1)}}{v_{n_0,m_0}^{(1)}}
 \end{pmatrix}, \quad
 \bM_{n_0,m_0}=
 \begin{pmatrix}
  a_2\frac{v_{n_0,m_0+1}^{(0)}}{v_{n_0,m_0}^{(0)}} & \lambda \\
  \lambda & a_2\frac{v_{n_0,m_0+1}^{(1)}}{v_{n_0,m_0}^{(1)}}
 \end{pmatrix};
\end{align*}
\item{Two-component system}
\begin{align*}
 &a_1\left(\frac{v_{n_0+1,m_0+1}^{(0)}}{v_{n_0,m_0+1}^{(0)}}-\frac{v_{n_0+1,m_0}^{(1)}}{v_{n_0,m_0}^{(1)}}\right)
 =a_2\left(\frac{v_{n_0+1,m_0+1}^{(0)}}{v_{n_0+1,m_0}^{(0)}}-\frac{v_{n_0,m_0+1}^{(1)}}{v_{n_0,m_0}^{(1)}}\right), \\
 &a_1\left(\frac{v_{n_0+1,m_0+1}^{(1)}}{v_{n_0,m_0+1}^{(1)}}-\frac{v_{n_0+1,m_0}^{(0)}}{v_{n_0,m_0}^{(0)}}\right)
 =a_2\left(\frac{v_{n_0+1,m_0+1}^{(1)}}{v_{n_0+1,m_0}^{(1)}}-\frac{v_{n_0,m_0+1}^{(0)}}{v_{n_0,m_0}^{(0)}}\right);
\end{align*}
\item{Scalar equation (discrete modified \ac{KdV} equation, cf. \eqref{dKdV:v})}
\begin{align*}
 a_1\left(v_{n_0,m_0}^{(0)}v_{n_0,m_0+1}^{(0)}-v_{n_0+1,m_0}^{(0)}v_{n_0+1,m_0+1}^{(0)}\right)
 =a_2\left(v_{n_0,m_0}^{(0)}v_{n_0+1,m_0}^{(0)}-v_{n_0,m_0+1}^{(0)}v_{n_0+1,m_0+1}^{(0)}\right).
\end{align*}

\end{itemize}

\item{Bilinear potential}

\begin{itemize}

\item{Lax matrices}
\begin{align*}
 &\bL_{n_0,m_0}=
 \begin{pmatrix}
  a_1\frac{\tau_{n_0+1,m_0}^{(1)}\tau_{n_0,m_0}^{(0)}}{\tau_{n_0+1,m_0}^{(0)}\tau_{n_0,m_0}^{(1)}} & \lambda \\
  \lambda & a_1\frac{\tau_{n_0+1,m_0}^{(0)}\tau_{n_0,m_0}^{(1)}}{\tau_{n_0+1,m_0}^{(1)}\tau_{n_0,m_0}^{(0)}}
 \end{pmatrix}, \\
 &\bM_{n_0,m_0}=
 \begin{pmatrix}
  a_2\frac{\tau_{n_0,m_0+1}^{(1)}\tau_{n_0,m_0}^{(0)}}{\tau_{n_0,m_0+1}^{(0)}\tau_{n_0,m_0}^{(1)}} & \lambda \\
  \lambda & a_2\frac{\tau_{n_0,m_0+1}^{(0)}\tau_{n_0,m_0}^{(1)}}{\tau_{n_0,m_0+1}^{(1)}\tau_{n_0,m_0}^{(0)}}
 \end{pmatrix};
\end{align*}
\item{Two-component system}
\begin{align*}
 &a_1\left(\tau_{n_0,m_0}^{(1)}\tau_{n_0+1,m_0+1}^{(0)}-\tau_{n_0,m_0+1}^{(0)}\tau_{n_0+1,m_0}^{(1)}\right)
 =a_2\left(\tau_{n_0,m_0}^{(1)}\tau_{n_0+1,m_0+1}^{(0)}-\tau_{n_0+1,m_0}^{(0)}\tau_{n_0,m_0+1}^{(1)}\right), \\
 &a_1\left(\tau_{n_0,m_0}^{(0)}\tau_{n_0+1,m_0+1}^{(1)}-\tau_{n_0,m_0+1}^{(1)}\tau_{n_0+1,m_0}^{(0)}\right)
 =a_2\left(\tau_{n_0,m_0}^{(0)}\tau_{n_0+1,m_0+1}^{(1)}-\tau_{n_0+1,m_0}^{(1)}\tau_{n_0,m_0+1}^{(0)}\right);
\end{align*}
\item{Scalar equation (5-point discrete bilinear \ac{KdV} equation.}
\begin{align*}
 (a_1-a_2)^2\tau_{n_0,m_0}^{(0)}\tau_{n_0+2,m_0+2}^{(0)}-(a_1+a_2)^2\tau_{n_0+2,m_0}^{(0)}\tau_{n_0,m_0+2}^{(0)}
 +4a_1a_2\left(\tau_{n_0+1,m_0+1}^{(0)}\right)^2=0;
\end{align*}

\item{Alternative scalar forms (6-point discrete bilinear \ac{KdV} equations)}
\begin{align*}
 &(a_1+a_2)\tau_{n_0,m_0+1}^{(0)}\tau_{n_0+2,m_0}^{(0)}+(a_1-a_2)\tau_{n_0,m_0}^{(0)}\tau_{n_0+2,m_0+1}^{(0)}
 =2a_1\tau_{n_0+1,m_0}^{(0)}\tau_{n_0+1,m_0+1}^{(0)}, \\
 &(a_1+a_2)\tau_{n_0+1,m_0}^{(0)}\tau_{n_0,m_0+2}^{(0)}-(a_1-a_2)\tau_{n_0,m_0}^{(0)}\tau_{n_0+1,m_0+2}^{(0)}
 =2a_2\tau_{n_0,m_0+1}^{(0)}\tau_{n_0+1,m_0+1}.
\end{align*}

\end{itemize}

\end{itemize}

\subsubsection{$(\alpha,\beta)=(0,1)$}

\begin{itemize}

\item{Additive potential}

\begin{itemize}

\item{First integrals}
\begin{align*}
 &\left(a_1+u_{n_0,m_1}^{(1)}-u_{n_0+1,m_1}^{(0)}\right)\left(a_1+u_{n_0,m_1}^{(0)}-u_{n_0+1,m_1}^{(1)}\right)=a_1^2, \\
 &\left(a_2+u_{n_0,m_1}^{(0)}-u_{n_0,m_1+1}^{(0)}\right)\left(a_2+u_{n_0,m_1}^{(1)}-u_{n_0,m_1+1}^{(1)}\right)=a_2^2;
\end{align*}
\item{Lax matrices}
\begin{align*}
 &\bL_{n_0,m_1}=
 \begin{pmatrix}
  a_1+u_{n_0,m_1}^{(1)}-u_{n_0+1,m_1}^{(0)} & \lambda \\
  \lambda & a_1+u_{n_0,m_1}^{(0)}-u_{n_0+1,m_1}^{(1)}
 \end{pmatrix}, \\
 &\bM_{n_0,m_1}=
 \begin{pmatrix}
  \lambda & a_2+u_{n_0,m_1}^{(0)}-u_{n_0,m_1+1}^{(0)} \\
  a_2+u_{n_0,m_1}^{(1)}-u_{n_0,m_1+1}^{(1)} & \lambda
 \end{pmatrix};
\end{align*}
\item{Two-component system}
\begin{align*}
 &\frac{a_1+u_{n_0,m_1+1}^{(1)}-u_{n_0+1,m_1+1}^{(0)}}{a_1+u_{n_0,m_1}^{(0)}-u_{n_0+1,m_1}^{(1)}}
 =\frac{a_2+u_{n_0+1,m_1}^{(0)}-u_{n_0+1,m_1+1}^{(0)}}{a_2+u_{n_0,m_1}^{(1)}-u_{n_0,m_1+1}^{(1)}}, \\
 &\frac{a_1+u_{n_0,m_1+1}^{(0)}-u_{n_0+1,m_1+1}^{(1)}}{a_1+u_{n_0,m_1}^{(1)}-u_{n_0+1,m_1}^{(0)}}
 =\frac{a_2+u_{n_0+1,m_1}^{(1)}-u_{n_0+1,m_1+1}^{(1)}}{a_2+u_{n_0,m_1}^{(0)}-u_{n_0,m_1+1}^{(0)}}.
\end{align*}

\end{itemize}

\item{Quotient potential}

\begin{itemize}

\item{First integrals}
\begin{align*}
 v_{n_0,m_1}^{(0)}v_{n_0,m_1}^{(1)}=1;
\end{align*}
\item{Lax matrices}
\begin{align*}
 \bL_{n_0,m_1}=
 \begin{pmatrix}
  a_1\frac{v_{n_0+1,m_1}^{(0)}}{v_{n_0,m_1}^{(0)}} & \lambda \\
  \lambda & a_1\frac{v_{n_0+1,m_1}^{(1)}}{v_{n_0,m_1}^{(1)}}
 \end{pmatrix}, \quad
 \bM_{n_0,m_1}=
 \begin{pmatrix}
  \lambda & a_2\frac{v_{n_0,m_1+1}^{(0)}}{v_{n_0,m_1}^{(1)}} \\
  a_2\frac{v_{n_0,m_1+1}^{(1)}}{v_{n_0,m_1}^{(0)}} & \lambda
 \end{pmatrix};
\end{align*}
\item{Two-component system (discrete-time \ac{2DTL} of $A_1^{(1)}$-type)}
\begin{align*}
 &a_1\left(\frac{v_{n_0+1,m_1+1}^{(0)}}{v_{n_0,m_1+1}^{(0)}}-\frac{v_{n_0+1,m_1}^{(0)}}{v_{n_0,m_1}^{(0)}}\right)
 =a_2\left(\frac{v_{n_0+1,m_1+1}^{(0)}}{v_{n_0+1,m_1}^{(1)}}-\frac{v_{n_0,m_1+1}^{(1)}}{v_{n_0,m_1}^{(0)}}\right), \\
 &a_1\left(\frac{v_{n_0+1,m_1+1}^{(1)}}{v_{n_0,m_1+1}^{(1)}}-\frac{v_{n_0+1,m_1}^{(1)}}{v_{n_0,m_1}^{(1)}}\right)
 =a_2\left(\frac{v_{n_0+1,m_1+1}^{(1)}}{v_{n_0+1,m_1}^{(0)}}-\frac{v_{n_0,m_1+1}^{(0)}}{v_{n_0,m_1}^{(1)}}\right);
\end{align*}
\item{Scalar equation (discrete \ac{sG} equation)}
\begin{align*}
 a_1\left(v_{n_0+1,m_1}^{(0)}v_{n_0,m_1+1}^{(0)}-v_{n_0,m_1}^{(0)}v_{n_0+1,m_1+1}^{(0)}\right)
 =a_2\left(1-v_{n_0,m_1}^{(0)}v_{n_0+1,m_1}^{(0)}v_{n_0,m_1+1}^{(0)}v_{n_0+1,m_1+1}^{(0)}\right).
\end{align*}

\end{itemize}

\item{Bilinear potential}

\begin{itemize}

\item{Lax matrices}
\begin{align*}
 &\bL_{n_0,m_1}=
 \begin{pmatrix}
  a_1\frac{\tau_{n_0+1,m_1}^{(1)}\tau_{n_0,m_1}^{(0)}}{\tau_{n_0+1,m_1}^{(0)}\tau_{n_0,m_1}^{(1)}} & \lambda \\
  \lambda & a_1\frac{\tau_{n_0+1,m_1}^{(0)}\tau_{n_0,m_1}^{(1)}}{\tau_{n_0+1,m_1}^{(1)}\tau_{n_0,m_1}^{(0)}}
 \end{pmatrix}, \\
 &\bM_{n_0,m_1}=
 \begin{pmatrix}
  \lambda & a_2\frac{\tau_{n_0,m_1+1}^{(1)}\tau_{n_0,m_1}^{(1)}}{\tau_{n_0,m_1+1}^{(0)}\tau_{n_0,m_1}^{(0)}} \\
  a_2\frac{\tau_{n_0,m_1+1}^{(0)}\tau_{n_0,m_1}^{(0)}}{\tau_{n_0,m_1+1}^{(1)}\tau_{n_0,m_1}^{(1)}} & \lambda
 \end{pmatrix};
\end{align*}

\item{Two-component system (discrete bilinear \ac{2DTL} of $A_1^{(1)}$-type})
\begin{align*}
 &a_1\left(\tau_{n_0,m_1}^{(0)}\tau_{n_0+1,m_1+1}^{(0)}-\tau_{n_0,m_1+1}^{(0)}\tau_{n_0+1,m_1}^{(0)}\right)
 =a_2\left(\tau_{n_0,m_1}^{(0)}\tau_{n_0+1,m_1+1}^{(0)}-\tau_{n_0+1,m_1}^{(1)}\tau_{n_0,m_1+1}^{(1)}\right), \\
 &a_1\left(\tau_{n_0,m_1}^{(1)}\tau_{n_0+1,m_1+1}^{(1)}-\tau_{n_0,m_1+1}^{(1)}\tau_{n_0+1,m_1}^{(1)}\right)
 =a_2\left(\tau_{n_0,m_1}^{(1)}\tau_{n_0+1,m_1+1}^{(1)}-\tau_{n_0+1,m_1}^{(0)}\tau_{n_0,m_1+1}^{(0)}\right).
\end{align*}

\end{itemize}

\end{itemize}

\subsubsection{$(\alpha,\beta)=(1,1)$}

\begin{itemize}

\item{Additive potential}

\begin{itemize}

\item{First integrals}
\begin{align*}
 &\left(a_1+u_{n_1,m_1}^{(0)}-u_{n_1+1,m_1}^{(0)}\right)\left(a_1+u_{n_1,m_1}^{(1)}-u_{n_1+1,m_1}^{(1)}\right)=a_1^2, \\
 &\left(a_2+u_{n_1,m_1}^{(0)}-u_{n_1,m_1+1}^{(0)}\right)\left(a_2+u_{n_1,m_1}^{(1)}-u_{n_1,m_1+1}^{(1)}\right)=a_2^2;
\end{align*}
\item{Lax matrices}
\begin{align*}
 &\bL_{n_1,m_1}=
 \begin{pmatrix}
  \lambda & a_1+u_{n_1,m_1}^{(0)}-u_{n_1+1,m_1}^{(0)} \\
  a_1+u_{n_1,m_1}^{(1)}-u_{n_1+1,m_1}^{(1)} & \lambda
 \end{pmatrix}, \\
 &\bM_{n_1,m_1}=
 \begin{pmatrix}
  \lambda & a_2+u_{n_1,m_1}^{(0)}-u_{n_1,m_1+1}^{(0)} \\
  a_2+u_{n_1,m_1}^{(1)}-u_{n_1,m_1+1}^{(1)} & \lambda
 \end{pmatrix};
\end{align*}
\item{Two-component system}
\begin{align*}
 &\frac{a_1+u_{n_1,m_1+1}^{(0)}-u_{n_1+1,m_1+1}^{(0)}}{a_1+u_{n_1,m_1}^{(1)}-u_{n_1+1,m_1}^{(1)}}
 =\frac{a_2+u_{n_1+1,m_1}^{(0)}-u_{n_1+1,m_1+1}^{(0)}}{a_2+u_{n_1,m_1}^{(1)}-u_{n_1,m_1+1}^{(1)}}, \\
 &\frac{a_1+u_{n_1,m_1+1}^{(1)}-u_{n_1+1,m_1+1}^{(1)}}{a_1+u_{n_1,m_1}^{(0)}-u_{n_1+1,m_1}^{(0)}}
 =\frac{a_2+u_{n_1+1,m_1}^{(1)}-u_{n_1+1,m_1+1}^{(1)}}{a_2+u_{n_1,m_1}^{(0)}-u_{n_1,m_1+1}^{(0)}};
\end{align*}
\item{Scalar form (discrete Schwarzian \ac{KdV} equation)}
\begin{align*}
 \frac{\left(a_1+u_{n_1,m_1}^{(0)}-u_{n_1+1,m_1}^{(0)}\right)\left(a_1+u_{n_1,m_1+1}^{(0)}-u_{n_1+1,m_1+1}^{(0)}\right)}
 {\left(a_2+u_{n_1,m_1}^{(0)}-u_{n_1,m_1+1}^{(0)}\right)\left(a_2+u_{n_1+1,m_1}^{(0)}-u_{n_1+1,m_1+1}^{(0)}\right)}
 =\frac{a_1^2}{a_2^2}.
\end{align*}

\end{itemize}

\item{Quotient potential}

\begin{itemize}

\item{First integrals}
\begin{align*}
 v_{n_1,m_1}^{(0)}v_{n_1,m_1}^{(1)}=1;
\end{align*}
\item{Lax matrices}
\begin{align*}
 \bL_{n_1,m_1}=
 \begin{pmatrix}
  \lambda & a_1\frac{v_{n_1+1,m_1}^{(0)}}{v_{n_1,m_1}^{(1)}} \\
  a_1\frac{v_{n_1+1,m_1}^{(1)}}{v_{n_1,m_1}^{(0)}} & \lambda
 \end{pmatrix}, \quad
 \bM_{n_1,m_1}=
 \begin{pmatrix}
  \lambda & a_2\frac{v_{n_1,m_1+1}^{(0)}}{v_{n_1,m_1}^{(1)}} \\
  a_2\frac{v_{n_1,m_1+1}^{(1)}}{v_{n_1,m_1}^{(0)}} & \lambda
 \end{pmatrix};
\end{align*}
\item{Two-component system}
\begin{align*}
 &a_1\left(\frac{v_{n_1+1,m_1+1}^{(0)}}{v_{n_1,m_1+1}^{(1)}}-\frac{v_{n_1+1,m_1}^{(0)}}{v_{n_1,m_1}^{(1)}}\right)
 =a_2\left(\frac{v_{n_1+1,m_1+1}^{(0)}}{v_{n_1+1,m_1}^{(1)}}-\frac{v_{n_1,m_1+1}^{(0)}}{v_{n_1,m_1}^{(1)}}\right), \\
 &a_1\left(\frac{v_{n_1+1,m_1+1}^{(1)}}{v_{n_1,m_1+1}^{(0)}}-\frac{v_{n_1+1,m_1}^{(1)}}{v_{n_1,m_1}^{(0)}}\right)
 =a_2\left(\frac{v_{n_1+1,m_1+1}^{(1)}}{v_{n_1+1,m_1}^{(0)}}-\frac{v_{n_1,m_1+1}^{(1)}}{v_{n_1,m_1}^{(0)}}\right);
\end{align*}
\item{Scalar form (discrete modified \ac{KdV} equation, cf. \eqref{dKdV:v})}
\begin{align*}
 a_1\left(v_{n_1,m_1}^{(0)}v_{n_1+1,m_1}^{(0)}-v_{n_1,m_1+1}^{(0)}v_{n_1+1,m_1+1}^{(0)}\right)
 =a_2\left(v_{n_1,m_1}^{(0)}v_{n_1,m_1+1}^{(0)}-v_{n_1+1,m_1}^{(0)}v_{n_1+1,m_1+1}^{(0)}\right).
\end{align*}

\end{itemize}

\item{Bilinear potential}

\begin{itemize}

\item{Lax matrices}
\begin{align*}
 &\bL_{n_1,m_1}=
 \begin{pmatrix}
  \lambda & a_1\frac{\tau_{n_1+1,m_1}^{(1)}\tau_{n_1,m_1}^{(1)}}{\tau_{n_1+1,m_1}^{(0)}\tau_{n_1,m_1}^{(0)}} \\
  a_1\frac{\tau_{n_1+1,m_1}^{(0)}\tau_{n_1,m_1}^{(0)}}{\tau_{n_1+1,m_1}^{(1)}\tau_{n_1,m_1}^{(1)}} & \lambda
 \end{pmatrix}, \\
 &\bM_{n_1,m_1}=
 \begin{pmatrix}
  \lambda & a_2\frac{\tau_{n_1,m_1+1}^{(1)}\tau_{n_1,m_1}^{(1)}}{\tau_{n_1,m_1+1}^{(0)}\tau_{n_1,m_1}^{(0)}} \\
  a_2\frac{\tau_{n_1,m_1+1}^{(0)}\tau_{n_1,m_1}^{(0)}}{\tau_{n_1,m_1+1}^{(1)}\tau_{n_1,m_1}^{(1)}} & \lambda
 \end{pmatrix};
\end{align*}

\item{Two-component system}
\begin{align*}
 &a_1\left(\tau_{n_1,m_1}^{(1)}\tau_{n_1+1,m_1+1}^{(0)}-\tau_{n_1,m_1+1}^{(1)}\tau_{n_1+1,m_1}^{(0)}\right)
 =a_2\left(\tau_{n_1,m_1}^{(1)}\tau_{n_1+1,m_1+1}^{(0)}-\tau_{n_1+1,m_1}^{(1)}\tau_{n_1,m_1+1}^{(0)}\right), \\
 &a_1\left(\tau_{n_1,m_1}^{(0)}\tau_{n_1+1,m_1+1}^{(1)}-\tau_{n_1,m_1+1}^{(0)}\tau_{n_1+1,m_1}^{(1)}\right)
 =a_2\left(\tau_{n_1,m_1}^{(0)}\tau_{n_1+1,m_1+1}^{(1)}-\tau_{n_1+1,m_1}^{(0)}\tau_{n_1,m_1+1}^{(1)}\right).
\end{align*}

\end{itemize}

\end{itemize}

\subsection{$\cN=3$}

\subsubsection{$(\alpha,\beta)=(0,0)$}

\begin{itemize}

\item{Additive potential}

\begin{itemize}

\item{First integrals}
\begin{align*}
 &\left(a_1+u_{n_0,m_0}^{(1)}-u_{n_0+1,m_0}^{(0)}\right)\left(a_1+u_{n_0,m_0}^{(2)}-u_{n_0+1,m_0}^{(1)}\right)
 \left(a_1+u_{n_0,m_0}^{(0)}-u_{n_0+1,m_0}^{(2)}\right)=a_1^3, \\
 &\left(a_2+u_{n_0,m_0}^{(1)}-u_{n_0,m_0+1}^{(0)}\right)\left(a_2+u_{n_0,m_0}^{(2)}-u_{n_0,m_0+1}^{(1)}\right)
 \left(a_2+u_{n_0,m_0}^{(0)}-u_{n_0,m_0+1}^{(2)}\right)=a_2^3;
\end{align*}
\item{Lax matrices}
\begin{align*}
 &\bL_{n_0,m_0}=
 \begin{pmatrix}
  a_1+u_{n_0,m_0}^{(1)}-u_{n_0+1,m_0}^{(0)} & \lambda & 0 \\
  0 & a_1+u_{n_0,m_0}^{(2)}-u_{n_0+1,m_0}^{(1)} & \lambda \\
  \lambda & 0 & a_1+u_{n_0,m_0}^{(0)}-u_{n_0+1,m_0}^{(2)}
 \end{pmatrix}, \\
 &\bM_{n_0,m_0}=
 \begin{pmatrix}
  a_2+u_{n_0,m_0}^{(1)}-u_{n_0,m_0+1}^{(0)} & \lambda & 0 \\
  0 & a_2+u_{n_0,m_0}^{(2)}-u_{n_0,m_0+1}^{(1)} & \lambda \\
  \lambda & 0 & a_2+u_{n_0,m_0}^{(0)}-u_{n_0,m_0+1}^{(2)}
 \end{pmatrix};
\end{align*}
\item{Three-component system}
\begin{align*}
 &\frac{a_1+u_{n_0,m_0+1}^{(1)}-u_{n_0+1,m_0+1}^{(0)}}{a_1+u_{n_0,m_0}^{(1)}-u_{n_0+1,m_0}^{(0)}}
 =\frac{a_2+u_{n_0+1,m_0}^{(1)}-u_{n_0+1,m_0+1}^{(0)}}{a_2+u_{n_0,m_0}^{(1)}-u_{n_0,m_0+1}^{(0)}}, \\
 &\frac{a_1+u_{n_0,m_0+1}^{(2)}-u_{n_0+1,m_0+1}^{(1)}}{a_1+u_{n_0,m_0}^{(2)}-u_{n_0+1,m_0}^{(1)}}
 =\frac{a_2+u_{n_0+1,m_0}^{(2)}-u_{n_0+1,m_0+1}^{(1)}}{a_2+u_{n_0,m_0}^{(2)}-u_{n_0,m_0+1}^{(1)}}, \\
 &\frac{a_1+u_{n_0,m_0+1}^{(0)}-u_{n_0+1,m_0+1}^{(2)}}{a_1+u_{n_0,m_0}^{(0)}-u_{n_0+1,m_0}^{(2)}}
 =\frac{a_2+u_{n_0+1,m_0}^{(0)}-u_{n_0+1,m_0+1}^{(2)}}{a_2+u_{n_0,m_0}^{(0)}-u_{n_0,m_0+1}^{(2)}};
\end{align*}
\item{Two-component system}
\begin{align*}
 &\frac{a_1+u_{n_0,m_0+1}^{(1)}-u_{n_0+1,m_0+1}^{(0)}}{a_1+u_{n_0,m_0}^{(1)}-u_{n_0+1,m_0}^{(0)}}
 =\frac{a_2+u_{n_0+1,m_0}^{(1)}-u_{n_0+1,m_0+1}^{(0)}}{a_2+u_{n_0,m_0}^{(1)}-u_{n_0,m_0+1}^{(0)}}, \\
 &\frac{a_1^3}{\left(a_1+u_{n_0,m_0}^{(1)}-u_{n_0+1,m_0}^{(0)}\right)}
 -\frac{a_2^3}{\left(a_2+u_{n_0,m_0}^{(1)}-u_{n_0,m_0+1}^{(0)}\right)} \\
 &\qquad =\left(a_1+a_2+u_{n_0,m_0}^{(0)}-u_{n_0+1,m_0+1}^{(1)}\right)
 \left(a_1-a_2+u_{n_0,m_0+1}^{(1)}-u_{n_0+1,m_0}^{(1)}\right);
\end{align*}
\item{Scalar equation (discrete (unmodified) \ac{BSQ} equation)}
\begin{align*}
 &\frac{a_1^3-a_2^3}{u_{n_0+2,m_0}^{(0)}-u_{n_0+1,m_0+1}^{(0)}}-\frac{a_1^3-a_2^3}{u_{n_0+1,m_0+1}^{(0)}-u_{n_0,m_0+2}^{(0)}} \\
 &\qquad =u_{n_0+2,m_0+2}^{(0)}\left(u_{n_0+2,m_0+1}^{(0)}-u_{n_0+1,m_0+2}^{(0)}\right)-u_{n_0+1,m_0}^{(0)}u_{n_0+2,m_0+1}^{(0)} \\
 &\qquad\qquad\qquad +u_{n_0,m_0+1}^{(0)}u_{n_0+1,m_0+2}^{(0)}+u_{n_0,m_0}^{(0)}\left(u_{n_0+1,m_0}^{(0)}-u_{n_0,m_0+1}^{(0)}\right).
\end{align*}

\end{itemize}

\item{Quotient potential}

\begin{itemize}

\item{First integrals}
\begin{align*}
 v_{n_0,m_0}^{(0)}v_{n_0,m_0}^{(1)}v_{n_0,m_0}^{(2)}=1;
\end{align*}
\item{Lax matrices}
\begin{align*}
 &\bL_{n_0,m_0}=
 \begin{pmatrix}
  a_1\frac{v_{n_0+1,m_0}^{(0)}}{v_{n_0,m_0}^{(0)}} & \lambda & 0 \\
  0 & a_1\frac{v_{n_0+1,m_0}^{(1)}}{v_{n_0,m_0}^{(1)}} & \lambda \\
  \lambda & 0 & a_1\frac{v_{n_0+1,m_0}^{(2)}}{v_{n_0,m_0}^{(2)}}
 \end{pmatrix}, \\
 &\bM_{n_0,m_0}=
 \begin{pmatrix}
  a_2\frac{v_{n_0,m_0+1}^{(0)}}{v_{n_0,m_0}^{(0)}} & \lambda & 0 \\
  0 & a_2\frac{v_{n_0,m_0+1}^{(1)}}{v_{n_0,m_0}^{(1)}} & \lambda \\
  \lambda & 0 & a_2\frac{v_{n_0,m_0+1}^{(2)}}{v_{n_0,m_0}^{(2)}}
 \end{pmatrix};
\end{align*}
\item{Three-component system}
\begin{align*}
 &a_1\left(\frac{v_{n_0+1,m_0+1}^{(0)}}{v_{n_0,m_0+1}^{(0)}}-\frac{v_{n_0+1,m_0}^{(1)}}{v_{n_0,m_0}^{(1)}}\right)
 =a_2\left(\frac{v_{n_0+1,m_0+1}^{(0)}}{v_{n_0+1,m_0}^{(0)}}-\frac{v_{n_0,m_0+1}^{(1)}}{v_{n_0,m_0}^{(1)}}\right), \\
 &a_1\left(\frac{v_{n_0+1,m_0+1}^{(1)}}{v_{n_0,m_0+1}^{(1)}}-\frac{v_{n_0+1,m_0}^{(2)}}{v_{n_0,m_0}^{(2)}}\right)
 =a_2\left(\frac{v_{n_0+1,m_0+1}^{(1)}}{v_{n_0+1,m_0}^{(1)}}-\frac{v_{n_0,m_0+1}^{(2)}}{v_{n_0,m_0}^{(2)}}\right), \\
 &a_1\left(\frac{v_{n_0+1,m_0+1}^{(2)}}{v_{n_0,m_0+1}^{(2)}}-\frac{v_{n_0+1,m_0}^{(0)}}{v_{n_0,m_0}^{(0)}}\right)
 =a_2\left(\frac{v_{n_0+1,m_0+1}^{(2)}}{v_{n_0+1,m_0}^{(2)}}-\frac{v_{n_0,m_0+1}^{(0)}}{v_{n_0,m_0}^{(0)}}\right);
\end{align*}
\item{Two-component system}
\begin{align*}
 &a_1\left(\frac{v_{n_0+1,m_0+1}^{(0)}}{v_{n_0,m_0+1}^{(0)}}-\frac{v_{n_0+1,m_0}^{(1)}}{v_{n_0,m_0}^{(1)}}\right)
 =a_2\left(\frac{v_{n_0+1,m_0+1}^{(0)}}{v_{n_0+1,m_0}^{(0)}}-\frac{v_{n_0,m_0+1}^{(1)}}{v_{n_0,m_0}^{(1)}}\right), \\
 &a_1\left(\frac{v_{n_0+1,m_0+1}^{(1)}}{v_{n_0,m_0+1}^{(1)}}
 -\frac{v_{n_0,m_0}^{(0)}v_{n_0,m_0}^{(1)}}{v_{n_0+1,m_0}^{(0)}v_{n_0+1,m_0}^{(1)}}\right)
 =a_2\left(\frac{v_{n_0+1,m_0+1}^{(1)}}{v_{n_0+1,m_0}^{(1)}}
 -\frac{v_{n_0,m_0}^{(0)}v_{n_0,m_0}^{(1)}}{v_{n_0,m_0+1}^{(0)}v_{n_0,m_0+1}^{(1)}}\right);
\end{align*}
\item{Scalar equation (discrete modified \ac{BSQ} equation)}
\begin{align*}
 &\frac{v_{n_0,m_0}^{(0)}}{v_{n_0+1,m_0}^{(0)}}-\frac{v_{n_0,m_0}^{(0)}}{v_{n_0,m_0+1}^{(0)}}+\frac{v_{n_0+2,m_0+1}^{(0)}}{v_{n_0+2,m_0+2}^{(0)}}-\frac{v_{n_0+1,m_0+2}^{(0)}}{v_{n_0+2,m_0+2}^{(0)}} \\
 &\qquad =\left(\frac{a_1^3v_{n_0+1,m_0+1}^{(0)}-a_2^3v_{n_0,m_0+2}^{(0)}}{v_{n_0,m_0+2}^{(0)}-v_{n_0+1,m_0+1}^{(0)}}\right)\frac{v_{n_0+1,m_0+2}^{(0)}}{v_{n_0,m_0+1}^{(0)}} \\
 &\qquad\qquad\qquad -\left(\frac{a_1^3v_{n_0+2,m_0}^{(0)}-a_2^3v_{n_0+1,m_0+1}^{(0)}}{v_{n_0+1,m_0+1}^{(0)}-v_{n_0+2,m_0}^{(0)}}\right)\frac{v_{n_0+2,m_0+1}^{(0)}}{v_{n_0+1,m_0}^{(0)}}.
\end{align*}

\end{itemize}

\item{Bilinear potential}

\begin{itemize}

\item{Lax matrices}
\begin{align*}
 &\bL_{n_0,m_0}=
 \begin{pmatrix}
  a_1\frac{\tau_{n_0+1,m_0}^{(1)}\tau_{n_0,m_0}^{(0)}}{\tau_{n_0+1,m_0}^{(0)}\tau_{n_0,m_0}^{(1)}} & \lambda & 0 \\
  0 & a_1\frac{\tau_{n_0+1,m_0}^{(2)}\tau_{n_0,m_0}^{(1)}}{\tau_{n_0+1,m_0}^{(1)}\tau_{n_0,m_0}^{(2)}} & \lambda \\
  \lambda & 0 &  a_1\frac{\tau_{n_0+1,m_0}^{(0)}\tau_{n_0,m_0}^{(2)}}{\tau_{n_0+1,m_0}^{(2)}\tau_{n_0,m_0}^{(0)}}
 \end{pmatrix}, \\
 &\bM_{n_0,m_0}=
 \begin{pmatrix}
  a_2\frac{\tau_{n_0,m_0+1}^{(1)}\tau_{n_0,m_0}^{(0)}}{\tau_{n_0,m_0+1}^{(0)}\tau_{n_0,m_0}^{(1)}} & \lambda & 0 \\
  0 & a_2\frac{\tau_{n_0,m_0+1}^{(2)}\tau_{n_0,m_0}^{(1)}}{\tau_{n_0,m_0+1}^{(1)}\tau_{n_0,m_0}^{(2)}} & \lambda \\
  \lambda & 0 & a_2\frac{\tau_{n_0,m_0+1}^{(0)}\tau_{n_0,m_0}^{(2)}}{\tau_{n_0,m_0+1}^{(2)}\tau_{n_0,m_0}^{(0)}}
 \end{pmatrix};
\end{align*}
\item{Three-component system}
\begin{align*}
 &a_1\left(\tau_{n_0,m_0}^{(1)}\tau_{n_0+1,m_0+1}^{(0)}-\tau_{n_0,m_0+1}^{(0)}\tau_{n_0+1,m_0}^{(1)}\right)
 =a_2\left(\tau_{n_0,m_0}^{(1)}\tau_{n_0+1,m_0+1}^{(0)}-\tau_{n_0+1,m_0}^{(0)}\tau_{n_0,m_0+1}^{(1)}\right), \\
 &a_1\left(\tau_{n_0,m_0}^{(2)}\tau_{n_0+1,m_0+1}^{(1)}-\tau_{n_0,m_0+1}^{(1)}\tau_{n_0+1,m_0}^{(2)}\right)
 =a_2\left(\tau_{n_0,m_0}^{(2)}\tau_{n_0+1,m_0+1}^{(1)}-\tau_{n_0+1,m_0}^{(1)}\tau_{n_0,m_0+1}^{(2)}\right), \\
 &a_1\left(\tau_{n_0,m_0}^{(0)}\tau_{n_0+1,m_0+1}^{(2)}-\tau_{n_0,m_0+1}^{(2)}\tau_{n_0+1,m_0}^{(0)}\right)
 =a_2\left(\tau_{n_0,m_0}^{(0)}\tau_{n_0+1,m_0+1}^{(2)}-\tau_{n_0+1,m_0}^{(2)}\tau_{n_0,m_0+1}^{(0)}\right);
\end{align*}
\item{Scalar equation (discrete trilinear \ac{BSQ} equation)}
\begin{align*}
 &\left(a_1^2+a_1a_2+a_2^2\right)\left(\tau_{n_0+1,m_0}^{(0)}\tau_{n_0,m_0+2}^{(0)}\tau_{n_0+2,m_0+1}^{(0)}+\tau_{n_0,m_0+1}^{(0)}\tau_{n_0+2,m_0}^{(0)}\tau_{n_0+1,m_0+2}^{(0)}\right) \\
 &\qquad =-(a_1-a_2)^2\tau_{n_0,m_0}^{(0)}\tau_{n_0+1,m_0+1}^{(0)}\tau_{n_0+2,m_0+2}^{(0)} \\
 &\qquad\qquad\qquad +3a_1^2\tau_{n_0+1,m_0}^{(0)}\tau_{n_0+1,m_0+1}^{(0)}\tau_{n_0+1,m_0+2}^{(0)}
 +3a_2^2\tau_{n_0,m_0+1}^{(0)}\tau_{n_0+1,m_0+1}^{(0)}\tau_{n_0+2,m_0+1}^{(0)}.
\end{align*}

\end{itemize}

\end{itemize}

\subsubsection{$(\alpha,\beta)=(0,2)$}

\begin{itemize}

\item{Additive potential}

\begin{itemize}

\item{First integrals}
\begin{align*}
 &\left(a_1+u_{n_0,m_2}^{(1)}-u_{n_0+1,m_2}^{(0)}\right)\left(a_1+u_{n_0,m_2}^{(2)}-u_{n_0+1,m_2}^{(1)}\right)
 \left(a_1+u_{n_0,m_2}^{(0)}-u_{n_0+1,m_2}^{(2)}\right)=a_1^3, \\
 &\left(a_2+u_{n_0,m_2}^{(0)}-u_{n_0,m_2+1}^{(0)}\right)\left(a_2+u_{n_0,m_2}^{(1)}-u_{n_0,m_2+1}^{(1)}\right)
 \left(a_2+u_{n_0,m_2}^{(2)}-u_{n_0,m_2+1}^{(2)}\right)=a_2^3;
\end{align*}
\item{Lax matrices}
\begin{align*}
 &\bL_{n_0,m_2}=
 \begin{pmatrix}
  a_1+u_{n_0,m_2}^{(1)}-u_{n_0+1,m_2}^{(0)} & \lambda & 0 \\
  0 & a_1+u_{n_0,m_2}^{(2)}-u_{n_0+1,m_2}^{(1)} & \lambda \\
  \lambda & 0 & a_1+u_{n_0,m_2}^{(0)}-u_{n_0+1,m_2}^{(2)}
 \end{pmatrix}, \\
 &\bM_{n_0,m_2}=
 \begin{pmatrix}
  \lambda & 0 & a_2+u_{n_0,m_2}^{(0)}-u_{n_0,m_2+1}^{(0)} \\
  a_2+u_{n_0,m_2}^{(1)}-u_{n_0,m_2+1}^{(1)} & \lambda & 0 \\
  0 & a_2+u_{n_0,m_2}^{(2)}-u_{n_0,m_2+1}^{(2)} & \lambda
 \end{pmatrix};
\end{align*}
\item{Three-component system}
\begin{align*}
 &\frac{a_1+u_{n_0,m_2+1}^{(1)}-u_{n_0+1,m_2+1}^{(0)}}{a_1+u_{n_0,m_2}^{(0)}-u_{n_0+1,m_2}^{(2)}}
 =\frac{a_2+u_{n_0+1,m_2}^{(0)}-u_{n_0+1,m_2+1}^{(0)}}{a_2+u_{n_0,m_2}^{(0)}-u_{n_0,m_2+1}^{(0)}}, \\
 &\frac{a_1+u_{n_0,m_2+1}^{(2)}-u_{n_0+1,m_2+1}^{(1)}}{a_1+u_{n_0,m_2}^{(1)}-u_{n_0+1,m_2}^{(0)}}
 =\frac{a_2+u_{n_0+1,m_2}^{(1)}-u_{n_0+1,m_2+1}^{(1)}}{a_2+u_{n_0,m_2}^{(1)}-u_{n_0,m_2+1}^{(1)}}, \\
 &\frac{a_1+u_{n_0,m_2+1}^{(0)}-u_{n_0+1,m_2+1}^{(2)}}{a_1+u_{n_0,m_2}^{(2)}-u_{n_0+1,m_2}^{(1)}}
 =\frac{a_2+u_{n_0+1,m_2}^{(2)}-u_{n_0+1,m_2+1}^{(2)}}{a_2+u_{n_0,m_2}^{(2)}-u_{n_0,m_2+1}^{(2)}}.
\end{align*}
\end{itemize}

\item{Quotient potential}

\begin{itemize}

\item{First integrals}
\begin{align*}
 v_{n_0,m_2}^{(0)}v_{n_0,m_2}^{(1)}v_{n_0,m_2}^{(2)}=1;
\end{align*}
\item{Lax matrices}
\begin{align*}
 &\bL_{n_0,m_2}=
 \begin{pmatrix}
  a_1\frac{v_{n_0+1,m_2}^{(0)}}{v_{n_0,m_2}^{(0)}} & \lambda & 0 \\
  0 & a_1\frac{v_{n_0+1,m_2}^{(1)}}{v_{n_0,m_2}^{(1)}} & \lambda \\
  \lambda & 0 & a_1\frac{v_{n_0+1,m_2}^{(2)}}{v_{n_0,m_2}^{(2)}}
 \end{pmatrix}, \\
 &\bM_{n_0,m_2}=
 \begin{pmatrix}
  \lambda & 0 & a_2\frac{v_{n_0,m_2+1}^{(0)}}{v_{n_0,m_2}^{(2)}} \\
  a_2\frac{v_{n_0,m_2+1}^{(1)}}{v_{n_0,m_2}^{(0)}} & \lambda & 0 \\
  0 & a_2\frac{v_{n_0,m_2+1}^{(2)}}{v_{n_0,m_2}^{(1)}} & \lambda
 \end{pmatrix};
\end{align*}
\item{Three-component system (discrete-time \ac{2DTL} of $A_2^{(1)}$-type)}
\begin{align*}
 &a_1\left(\frac{v_{n_0+1,m_2+1}^{(0)}}{v_{n_0,m_2+1}^{(0)}}-\frac{v_{n_0+1,m_2}^{(0)}}{v_{n_0,m_2}^{(0)}}\right)
 =a_2\left(\frac{v_{n_0+1,m_2+1}^{(0)}}{v_{n_0+1,m_2}^{(2)}}-\frac{v_{n_0,m_2+1}^{(1)}}{v_{n_0,m_2}^{(0)}}\right), \\
 &a_1\left(\frac{v_{n_0+1,m_2+1}^{(1)}}{v_{n_0,m_2+1}^{(1)}}-\frac{v_{n_0+1,m_2}^{(1)}}{v_{n_0,m_2}^{(1)}}\right)
 =a_2\left(\frac{v_{n_0+1,m_2+1}^{(1)}}{v_{n_0+1,m_2}^{(0)}}-\frac{v_{n_0,m_2+1}^{(2)}}{v_{n_0,m_2}^{(1)}}\right), \\
 &a_1\left(\frac{v_{n_0+1,m_2+1}^{(2)}}{v_{n_0,m_2+1}^{(2)}}-\frac{v_{n_0+1,m_2}^{(2)}}{v_{n_0,m_2}^{(2)}}\right)
 =a_2\left(\frac{v_{n_0+1,m_2+1}^{(2)}}{v_{n_0+1,m_2}^{(1)}}-\frac{v_{n_0,m_2+1}^{(0)}}{v_{n_0,m_2}^{(2)}}\right);
\end{align*}
\item{Two-component system}
\begin{align*}
 &a_1\left(\frac{v_{n_0+1,m_2+1}^{(0)}}{v_{n_0,m_2+1}^{(0)}}-\frac{v_{n_0+1,m_2}^{(0)}}{v_{n_0,m_2}^{(0)}}\right)
 =a_2\left(v_{n_0+1,m_2+1}^{(0)}v_{n_0+1,m_2}^{(0)}v_{n_0+1,m_2}^{(1)}-\frac{v_{n_0,m_2+1}^{(1)}}{v_{n_0,m_2}^{(0)}}\right), \\
 &a_1\left(\frac{v_{n_0+1,m_2+1}^{(1)}}{v_{n_0,m_2+1}^{(1)}}-\frac{v_{n_0+1,m_2}^{(1)}}{v_{n_0,m_2}^{(1)}}\right)
 =a_2\left(\frac{v_{n_0+1,m_2+1}^{(1)}}{v_{n_0+1,m_2}^{(0)}}-\frac{1}{v_{n_0,m_2+1}^{(0)}v_{n_0,m_2+1}^{(1)}v_{n_0,m_2}^{(1)}}\right).
\end{align*}
\end{itemize}

\item{Bilinear potential}

\begin{itemize}

\item{Lax matrices}
\begin{align*}
 &\bL_{n_0,m_2}=
 \begin{pmatrix}
  a_1\frac{\tau_{n_0+1,m_2}^{(1)}\tau_{n_0,m_2}^{(0)}}{\tau_{n_0+1,m_2}^{(0)}\tau_{n_0,m_2}^{(1)}} & \lambda & 0 \\
  0 & a_1\frac{\tau_{n_0+1,m_2}^{(2)}\tau_{n_0,m_2}^{(1)}}{\tau_{n_0+1,m_2}^{(1)}\tau_{n_0,m_2}^{(2)}} & \lambda \\
  \lambda & 0 &  a_1\frac{\tau_{n_0+1,m_2}^{(0)}\tau_{n_0,m_2}^{(2)}}{\tau_{n_0+1,m_2}^{(2)}\tau_{n_0,m_2}^{(0)}}
 \end{pmatrix}, \\
 &\bM_{n_0,m_2}=
 \begin{pmatrix}
  \lambda & 0 & a_2\frac{\tau_{n_0,m_2+1}^{(1)}\tau_{n_0,m_2}^{(2)}}{\tau_{n_0,m_2+1}^{(0)}\tau_{n_0,m_2}^{(0)}} \\
  a_2\frac{\tau_{n_0,m_2+1}^{(2)}\tau_{n_0,m_2}^{(0)}}{\tau_{n_0,m_2+1}^{(1)}\tau_{n_0,m_2}^{(1)}} & \lambda & 0 \\
  0 & a_2\frac{\tau_{n_0,m_2+1}^{(0)}\tau_{n_0,m_2}^{(1)}}{\tau_{n_0,m_2+1}^{(2)}\tau_{n_0,m_2}^{(2)}} & \lambda
 \end{pmatrix};
\end{align*}
\item{Three-component system (discrete-time bilinear \ac{2DTL} of $A_2^{(1)}$-type)}
\begin{align*}
 &a_1\left(\tau_{n_0,m_2}^{(0)}\tau_{n_0+1,m_2+1}^{(0)}-\tau_{n_0,m_2+1}^{(0)}\tau_{n_0+1,m_2}^{(0)}\right)
 =a_2\left(\tau_{n_0,m_2}^{(0)}\tau_{n_0+1,m_2+1}^{(0)}-\tau_{n_0+1,m_2}^{(2)}\tau_{n_0,m_2+1}^{(1)}\right), \\
 &a_1\left(\tau_{n_0,m_2}^{(1)}\tau_{n_0+1,m_2+1}^{(1)}-\tau_{n_0,m_2+1}^{(1)}\tau_{n_0+1,m_2}^{(1)}\right)
 =a_2\left(\tau_{n_0,m_2}^{(1)}\tau_{n_0+1,m_2+1}^{(1)}-\tau_{n_0+1,m_2}^{(0)}\tau_{n_0,m_2+1}^{(2)}\right), \\
 &a_1\left(\tau_{n_0,m_2}^{(2)}\tau_{n_0+1,m_2+1}^{(2)}-\tau_{n_0,m_2+1}^{(2)}\tau_{n_0+1,m_2}^{(2)}\right)
 =a_2\left(\tau_{n_0,m_2}^{(2)}\tau_{n_0+1,m_2+1}^{(2)}-\tau_{n_0+1,m_2}^{(1)}\tau_{n_0,m_2+1}^{(0)}\right).
\end{align*}

\end{itemize}

\end{itemize}

\renewcommand{\bibname}{References}
\bibliographystyle{unsrt}
\bibliography{References-ZN}
\end{document}